\newcommand{\dist}{\mathtt{dist}}
\newcommand{\dd}{\mathtt{D}}
\newcommand{\safe}{\mathtt{safe}}
\newcommand{\diag}{\mathtt{diag}}
\newcommand{\dest}{\mathtt{dest}}
\newcommand{\EO}{\mathtt{EO}}
\newcommand{\Loss}{\mathtt{Loss}}
\newtheorem{theorem}{Theorem}
\newtheorem{prop}{Proposition}
\begin{document}

\title{Data-Driven Robust Optimization for Energy-Aware Safe Motion Planning of Electric Vehicles}

\author{Simran Kumari, Ashish R. Hota and Siddhartha Mukhopadhyay
        % <-this % stops a space
\thanks{The authors are with the Department of Electrical Engineering, IIT Kharagpur, West Bengal, India. Email: simranjnr@kgpian.iitkgp.ac.in, \{ahota,smukh\}@ee.iitkgp.ac.in. Simran Kumari is supported by Ministry of Education (MoE), Govt. of India under Prime Minister Research Fellowship (PMRF) scheme.}% <-this % stops a space}
}

\maketitle

\begin{abstract}
In this paper, we simultaneously address the problems of energy optimal and safe motion planning of electric vehicles (EVs) in a data-driven robust optimization framework. Safe maneuvers, especially in urban traffic, are characterized by frequent lateral motions, such as lane changes, overtakes and turning along curved roads. Motivated by our previous work which shows a $3-10 \%$ increase in energy consumption due to lateral motion when an electric vehicle changes its lane once every kilometer while following standard drive cycles, we incorporate vehicle lateral dynamics in the modeling and control synthesis, which is in contrast with most prior works. In the context of safety, we leverage past data of obstacle motion to construct a future occupancy set with probabilistic guarantees, and formulate robust collision avoidance constraints with respect to such an occupancy set using convex programming duality. Consequently, we formulate a finite-horizon optimal control problem subject to robust collision avoidance constraints while penalizing resulting energy consumption, and solve it in a receding horizon fashion. Finally, we show the effectiveness of the proposed approach in reducing energy consumption and collision avoidance via numerical simulations involving curved roads and multiple obstacles. A detailed analysis of energy consumption along different components of EV motion highlights appreciable improvement under the proposed approach.
\end{abstract}

\begin{IEEEkeywords}
Electric vehicle, Collision avoidance, Motion planning, Energy-aware driving, Data-driven robust optimization.
\end{IEEEkeywords}

\section{Introduction}
Electric vehicles (EVs), being a cleaner alternative to vehicles with internal combustion engines, have seen significant growth in the past few years. Advanced Driver Assist Systems (ADAS) for EVs need to fulfil two primary objectives. First, it is essential for ADAS to ensure safe collision-free operation of the vehicle in the uncertain environment of roads with nearby vehicles represented as dynamic obstacles. The second objective is to consider energy efficient driving in order to be able to extend the driving range in the face of battery storage limitations. In this work, we propose a data-driven robust receding-horizon optimization framework for energy efficient and safe motion planning of EVs.

In the context of safety, there exists a substantial body of literature on path planning and collision avoidance for (autonomous) vehicles \cite{lefevre2014survey,gonzalez2015review,dixit2019trajectory}. Receding-horizon or model predictive control (MPC) based approaches are being used widely for ADAS functionalities such as lane change, overtaking and collision avoidance, due to its real-time planning ability while adhering to constraints imposed by vehicle dynamics and obstacles \cite{dixit2019trajectory,zhang2019trajectory,zhang2020optimization}. Specifically, \cite{zhang2020optimization} formulated collision avoidance constraints for convex polytopic representation of obstacles. However, these works assumed obstacles to be static and ignored the dynamic evolution of their positions over time. 

Several recent papers have tackled the problem of safe motion planning in presence of dynamic obstacles using data-driven or stochastic MPC techniques. The work \cite{schildbach2015scenario} utilized scenario MPC for trajectory planning during lane change operation initiated by the driver. Similarly, \cite{brudigam2021stochastic} proposed a stochastic MPC formulation with collision avoidance constraints by approximating the motion of surrounding vehicles by linear dynamics. The authors in \cite{qie2022improved} combined physics-based and data-driven intention-based models for future trajectory prediction of obstacles, and formulated a tube-based MPC to track a given reference trajectory generated by considering a predicted obstacle trajectory. Similarly, \cite{benciolini2023non} presented a trajectory planning framework which estimated intentions of dynamic obstacles. The authors in \cite{zhou2022interaction} predicted multi-modal probabilistic occupancy of surrounding vehicles for a given maneuver set, and generated target trajectory while maintaining safe distance from the predicted occupancy set. Authors in \cite{hakobyan2020wasserstein,navsalkar2023data} formulated distributionally robust MPC for safe navigation of mobile robots in uncertain environments; however the resulting optimization problems scaled with the number of data points leading to higher computational burden. However, the above works did not examine resulting energy consumption due to safe maneuvers, which is critical in the context of EVs.

Several past works have also explored the problem of energy-efficient driving in EVs, and proposed eco-driving \cite{lee2020model,shen2020minimum,dib2014optimal} and eco-routing \cite{yi2018optimal} solutions. These approaches generally obtained optimized EV speed profile over distance within $2$ km considering longitudinal dynamics, resulting in overestimated driving range since road curvature and lateral motion were neglected; further, safety issues brought on by nearby traffic participants were ignored. There are a few works that simultaneously incorporate the effect of energy consumption and safety in the trajectory or motion planning problem. For instance, \cite{sajadi2019nonlinear,kamal2010ecological} proposed the ecological driver assistance system (EDAS), and mostly examined the Adaptive Cruise Control (ACC) mode. EDAS provided short-range optimized speed profile (generally over a window of $10$ seconds), while maintaining a safe distance (about $15$ or $20$ m) with respect to a preceding vehicle. Moreover, EDAS considered only the effects of longitudinal motion on energy consumption. The authors in \cite{hausler2015energy} tackled a similar problem for generic mobile robots; however, they modeled the robot dynamics by the kinematic bicycle model which ignores forces acting at the wheels, and it is not feasible to relate battery energy consumption of an EV with only kinematic variables.

However, real world safe driving, especially urban driving, includes far more complex scenarios, consisting of discrete events of overtaking, lane changes and double lane changes during maneuver, and is therefore dominated by repeated lateral motion. In fact, past works \cite{knoop2012quantifying,yang2018effect,asaithambi2017overtaking} have shown that a vehicle typically changes its lane once every $0.5-1$ km. In our previous work \cite{latdyn}, we showed that there exists $3-10 \%$ increase in energy consumption due to lateral motion, leading to a significant difference in the driving range, particularly a reduction of $15$, $23$, and $36$ km for FTP-$75$, NEDC, and HWFET drive cycles, respectively, for a $54.28$ kWh battery given that the vehicle changes its lane once every $1$ km. In addition to lane change, curved roads also result in EV lateral motion and additional energy consumption compared to straight roads. The magnitude of the driving range reduction grow noticeably with a reduction in the radius of curvature of the road \cite{ding2018optimal,bifulco2021combined} as illustrated in Section III of the paper.

In light of the above discussion, we propose a data-driven robust optimization framework for energy-aware safe motion planning of an EV that incorporates the impact of lateral motion, and highlight its effectiveness via realistic simulations. Our contributions are summarized below.
\begin{itemize}
    \item We consider a dynamic model of the EV that captures both longitudinal and lateral motion as well as the dynamics of stored energy level in the battery, which is a departure from most past works that only consider vehicle longitudinal dynamics. In contrast with longitudinal motion models, the proposed model captures realistic driving scenarios, which includes lane changes, overtaking, and curved roads as well as provides more accurate energy consumption quantification.
    \item Instead of adopting a model-based approach to predict the future trajectory of the obstacles, we adopt a data-driven approach where past data on the motion of surrounding obstacles is used to construct a polyhedral occupancy set for each obstacle over a finite prediction horizon. These sets are constructed using Principal Component Analysis (PCA), and are guaranteed to contain the future positions of the obstacles with high probability following prior works \cite{cheramin2021data,shang2019data}. This approach easily integrates measurements obtained from on-vehicle sensors such as cameras, Lidars, and radars.
    \item We then derive robust collision avoidance constraints which guarantee that the EV will not intersect with the future occupancy sets of the obstacles over a finite prediction horizon. A multi-stage optimization problem is formulated which minimizes control efforts, distance from a specified point, and energy consumption from the battery, while satisfying robust collision avoidance constraints, and bounds on states and control inputs. 
\end{itemize}

We evaluate the effectiveness of the proposed approach via simulations for different scenarios, including motion on a straight road, curved road, and with multiple obstacles. We compare the proposed approach with an energy-unaware approach (where the penalty on energy consumption is removed from the cost function) and provide a detailed analysis of components of energy consumption due to longitudinal and lateral motion. Our results show that under the proposed approach, the EV is able to maneuver ahead while avoiding obstacles, and consumes significantly less amount of energy compared to the energy-unaware scheme with minimal change in travel time. We further highlight the significance of energy consumption due to lateral motion during overtaking and on curved roads, and the efficacy of the proposed scheme in improving the driving range. 

%%%%%%%%%%%%%%%%%%%%%%%%%%%%%%%%%%%%%%%%%%%%%%%%%%%%%%%%%%%%%%%%%%%%%%%%%%%%%%%%%%%%%
%%%%%%%%%%%%%%%%%%%%%%%%%%%%%%%%%%%%%%%%%%%%%%%%%%%%%%%%%%%%%%%%%%%%%%%%%%%%%%%%%%%%%%%%%%
%%%%%%%%%%%%%%%%%%%%%%%%%%%%%%%%%%%%%%%%%%%%%%%%%%%%%%%%%%%%%%%%%%%%%%%%%%%%%%%%%%%%%%%

\section{Energy-Aware Safe Motion Planning Framework}\label{prblm_des}
We consider the problem of energy-aware safe motion planning as an optimal control problem with robust collision avoidance constraints. The components of the stated problem, including cost function, EV dynamics, and collision avoidance constraints, formulated leveraging past data, are described in this section.  

\subsection{Plant Model} 

We consider a bicycle model \cite{kong2015kinematic} of a rear-wheel-driven EV with states given by coordinate of the center of gravity (COG) along longitudinal axis of EV ($s_x$) in the body frame, lateral deviation ($e_y$) and relative yaw angle ($e_{\psi}$) with respect to center-line of road, position of COG of the EV ($p_x, p_y$) in the inertial frame, velocity along longitudinal and lateral axes of the EV ($v_x, v_y$) in the body frame, rotation of the vehicle frame about the $z$-axis ($\psi$) in the inertial frame, yaw rate ($r$) in the body frame, and state-of-energy of the battery ($\gamma$). There are three inputs being applied to the EV: acceleration along longitudinal axis ($a$), braking deceleration along longitudinal axis ($d$), and steering angle ($\delta$). The continuous-time dynamics of EV with state vector $x :=[s_x, e_y, e_{\psi}, v_x, v_y, r, \gamma, p_x, p_y, \psi]^\intercal \in \mathbb{R}^{10}$ and input vector $u :=[a, \delta, d]^\intercal \in \mathbb{R}^{3} $ are given by
\begin{subequations}\label{eq:plant_dynamics}
\begin{align}
\dot{s}_x&=v_x,\label{1}\\
\dot{e}_y&=v_x+e_{\psi}v_y,\\
\dot{e}_{\psi}&=v_y-v_x\rho,\\
\dot{v_x}&=\frac{F_{\mathrm{x}}}{m}-F_{\mathrm{a}}-F_{\mathrm{r}}-\left(\frac{F_{F,y}\sin\delta-mv_y r}{m} \right)\\ \nonumber
&=a+d-F_{\mathrm{a}}-F_{\mathrm{r}}-\left(\frac{F_{F,y}\sin\delta-mv_y r}{m} \right),\\
\dot{v_y}&=\frac{(F_{F,y}\cos\delta+F_{R,y}-mv_xr)}{m},\\
\dot{r}&=\frac{(F_{F,y}l_F\cos\delta-F_{R,y}l_R)}{I_{\mathrm{z}}},\\
\dot{\gamma}&= \frac{-\eta_\mathrm{b} P_\mathrm{b}}{E_\mathrm{b}},\\
\dot{p}_x&= v_x\cos\psi-v_y\sin\psi,\\
\dot{p}_y&= v_x\sin\psi+v_y\cos\psi,\\
\dot{\psi}&=r. \label{1_end}
\end{align}
\end{subequations}

\begin{table}[t!]
\centering
\caption{Nomenclature}
   \begin{tabular}{|p{6cm}|p{1.5cm}|}
 \hline
 \textbf{Parameters} & \textbf{Symbols} \\
 \hline \hline
  Mass & $m$ \\
  Yaw polar inertia & $I_{\mathrm{z}}$\\
  Coefficient of drag & $C_\mathrm{d}$\\
   Air density & $\rho_\mathrm{a}$\\
    Frontal area of EV & $A_\mathrm{f}$\\
    Rolling resistance coefficient & $C_\mathrm{r}$\\
   Longitudinal distance from COG to front axle & $ l_F $ \\
     Longitudinal distance from COG to rear axle & $ l_R $ \\
        slip angle at front tire  & $ \alpha_F $ \\
         Slip angle at rear tire & $ \alpha_R $ \\
              Front tire corner stiffness  & $ C_{F} $ \\
              Rear tire corner stiffness  & $ C_{R} $\\ 
              Lateral force at front wheel  & $F_{F,y}$\\
              Lateral force at rear wheel & $F_{R,y}$\\
              Longitudinal force acting at COG of vehicle & $F_\mathrm{x}$\\
              Curvature of road & $\rho$\\
              % Slip angle of EV & $\beta_{dyn}$\\
Differential gear ratio & $N_\mathrm{d}$\\
%Transmission gear ratio & $N_t$\\
Wheel radius & $r_\mathrm{w}$\\
Drivetrain efficiency & $\eta_\mathrm{i}$\\
Battery charging/discharging efficiency & $\eta_\mathrm{b}$\\
Energy storage capacity of the battery & $E_\mathrm{b}$\\
Battery power & $P_\mathrm{b}$\\
 Motor Torque & $\tau_\mathrm{m}$\\
 Motor speed & $\omega_\mathrm{m}$\\
\hline    
    \end{tabular}
    \label{tab:my_label}
\end{table}

The notation used in the above equations for different EV parameters are summarized in Table~\ref{tab:my_label}, and a schematic of the model is presented in Figure \ref{model}. In particular, the force terms $F_{F,y}$, $F_{R,y}$, $F_\mathrm{a}$ and $F_\mathrm{r}$ are given by
\begin{subequations}
\begin{align}
F_{F,y}&=-2C_{F}\alpha_{F},\quad F_{R,y}=-2C_{R}\alpha_{R},\\
\alpha_F&= \arctan\left(\frac{v_y+l_Fr}{v_x}\right)-\delta,\\
\alpha_R&= \arctan\left(\frac{v_y-l_Rr}{v_x}\right),\\
F_\mathrm{a}&=\frac{0.5\rho_\mathrm{a} C_\mathrm{d} A_\mathrm{f} v_{x}^2}{m},\\
F_\mathrm{r}&=C_\mathrm{r}g.
\end{align}
\end{subequations}

\begin{figure}[t]
\centerline{\includegraphics[width=0.45\textwidth,keepaspectratio]{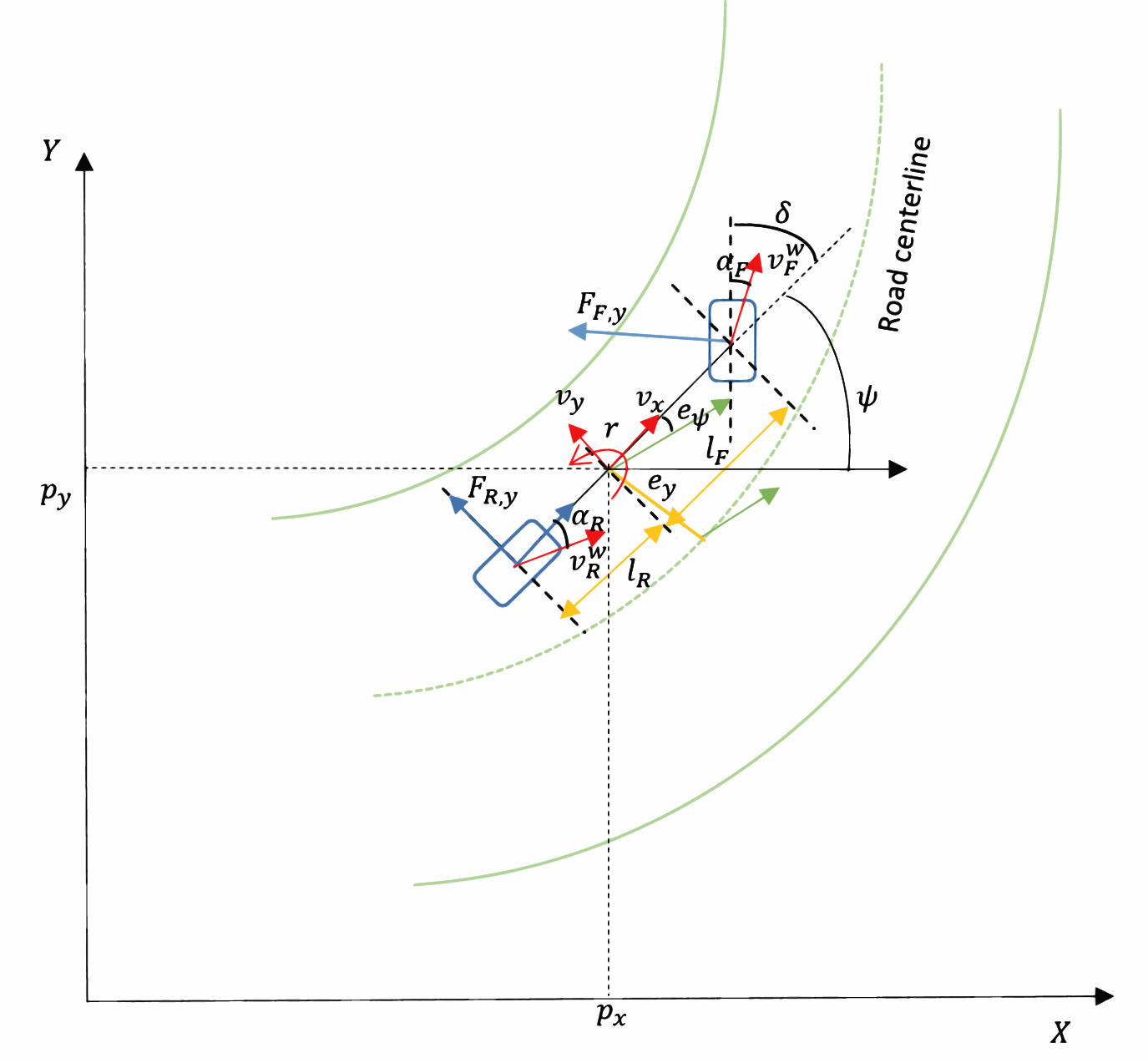}}
\caption{\footnotesize Schematic of different kinematic and dynamic variables associated with the bicycle model.}
\label{model}
\end{figure}
% {figures/dyn_road.jpg}
For a highly efficient inverter, the battery power $P_\mathrm{b}$ (assumed to be positive during discharging and negative during charging/regeneration) can be approximated in terms of motor input power and motor power loss, both of which depend on the motor torque $\tau_\mathrm{m}$ and motor speed $\omega_\mathrm{m}$. Specifically, 
\begin{subequations}\label{eq:battery_power}
\begin{align}
    \tau_\mathrm{m}&=\frac{mr_\mathrm{w}}{\eta_\mathrm{i}N_\mathrm{d}}a,\quad \omega_\mathrm{m}=\frac{N_\mathrm{d}}{r_\mathrm{w}} v_x,
    \\ P_\mathrm{b}&=\tau_\mathrm{m}\omega_\mathrm{m}+P_{\Loss},
    \\ P_{\Loss}&=k_\mathrm{c}\tau_\mathrm{m}^2+k_\mathrm{i}\omega_\mathrm{m}+k_\mathrm{w}\omega_\mathrm{m}^3+k_\mathrm{f}\omega_\mathrm{m}^2+k_0, 
\end{align}
\end{subequations}
where $P_{\Loss}$ represents power loss in the motor assumed to be a function of $\tau_\mathrm{m}$ and $\omega_\mathrm{m}$ with coefficient of copper loss $k_\mathrm{c}$, iron loss coefficient $k_\mathrm{i}$, the windage loss coefficient $k_\mathrm{w}$, the friction loss coefficient $k_\mathrm{f}$, and $k_0$ being a constant component \cite{larminie2012electric}. Consequently, we approximate battery power as a cubic polynomial of the motor speed and motor torque given by 
\begin{align}
P_b(\omega_\mathrm{m},\tau_\mathrm{m}) & = c_1+c_2\omega_\mathrm{m}+c_3\tau_\mathrm{m}+c_4\omega_\mathrm{m}^2+c_5\omega_\mathrm{m}\tau_\mathrm{m}\nonumber
\\ & \quad +c_6\tau_\mathrm{m}^2+c_7\omega_\mathrm{m}^3.\label{eq:battery_power_approx}
\end{align}
The coefficients are obtained using least squares regression. For simplicity, we assume that efficiency during charging and discharging operation of the battery are identical.

Note that we have included position and orientation in both inertial and body coordinate systems as state variables in order to capture kinematic as well as dynamic behavior of EV. Previewed road curvature $\rho$ is estimated as a function of $s_x$ given by
\begin{equation}
    \rho(s_x) := k_1s_x^2+k_2s_x+k_3, \label{curv}
\end{equation}
with the coefficients obtained using least square regression.

The continuous-time model \eqref{eq:plant_dynamics} can be represented in compact form as $\dot{x} =f(x,u)$.\footnote{We assume that all states are measurable, and sensors provide accurate measurement.} The discrete-time EV model is obtained after Euler discretization\footnote{Other suitable discretization techniques may also be applied.} of the continuous-time model with sampling time $T_\mathrm{s}$, and is given by
\begin{align}
x_{t+1}&=x_t+T_\mathrm{s}f(x_t,u_t).\label{dyyn}
\end{align}

\subsection{Nominal Constraints}
The EV dynamic variables need to satisfy the following constraints on the inputs and states:
\begin{subequations}
\begin{align}
    a_{\min}&\leq a \leq a_{\max}, \label{12}\\
    d_{\min}&\leq d \leq d_{\max}, \label{12a} \\
    \delta_{\min}&\leq \delta \leq \delta_{\max}, \label{13}\\
    v_{x,\min}&\leq v_x\leq v_{x,\max}, \label{sb1} \\
    e_{y,\min}&\leq e_y\leq e_{y,\max}, \label{sb2} \\
    \gamma_{\min}& \leq \gamma \leq \gamma_{\max} \label{sb3}.
\end{align}
\end{subequations}
These include actuation constraints \eqref{12}, \eqref{12a}, \eqref{13}, longitudinally forward movement of EV \eqref{sb1}, movement within road boundary \eqref{sb2}, and safe operating region of battery \eqref{sb3}. For a specific curved road, high speed maneuver requires large lateral acceleration and yaw rate. If lateral tyre forces are saturated due to non-linearity and limitations of tyre vertical loads, sufficient lateral acceleration or yaw rate may not be generated. This leads to loss of vehicle stability such as oversteer and understeer conditions \cite{zhou2021vehicle}. Therefore, the value of the maximum speed limit $v_{x,\max}$ is chosen based on the radius of curvature of the road according to \cite{amata2008abrupt}, and the constraint \eqref{sb1} plays a critical role in ensuring lateral stability of the vehicle. Note that constraints on the rate of the change of the above variables may also be included with minimal increase the complexity of the optimization problem.

\begin{figure}[t]
\centerline{\includegraphics[width=0.4\textwidth,keepaspectratio]{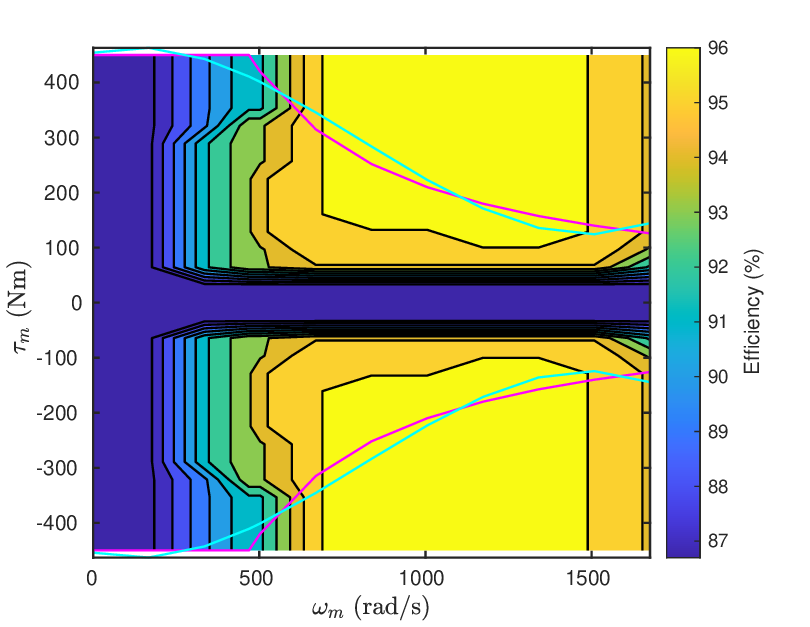}}
\caption{\footnotesize Motor efficiency plot. Actual and estimated motor torque-speed curve is plotted in red and cyan respectively. {The motor torque is required to remain bounded between the two cyan curves according to \eqref{trq} and \eqref{torque_limit}.}}
\label{fig:motor}
\end{figure}

In addition, motivated by typical motor torque-speed efficiency curve \cite{motor}, we impose the following constraints on the motor torque ($\tau_\mathrm{m} = \frac{m r_\mathrm{w}}{\eta_\mathrm{i}N_\mathrm{d}} a$):
\begin{subequations}
\begin{align}
& -\tau_{\mathrm{m},\max}(v_x) \leq  \frac{m r_\mathrm{w}}{\eta_\mathrm{i}N_\mathrm{d}} a\leq \tau_{\mathrm{m},\max}(v_x), \quad \text{where} \label{trq}\\
& \tau_{\mathrm{m},\max}(v_x)=\xi_1v_x^3+\xi_2v_x^2+\xi_3v_x+\xi_4. \label{torque_limit}
\end{align}
\end{subequations}
In particular, the torque-speed efficiency curve is approximated by a cubic polynomial of motor speed $\omega_\mathrm{m}$ (and thus, longitudinal speed $v_x$); see Figure \ref{fig:motor} for an illustration. The coefficients are determined using least squares regression. 

Note that we assume a constant drivetrain efficiency, denoted as $\eta_\mathrm{i}$, for the components involved in transmitting power from the propulsion source to the wheels. This includes elements such as the transmission, driveshaft, differential, axles, and joints, but does not include the motor. We incorporate a dynamic efficiency map between motor input and motor output, as illustrated in Figure \ref{fig:motor}. The powertrain consists of both the drivetrain and the motor, and its efficiency is dynamic because the motor efficiency is dynamic.

%%%%%%%%%%%%%%%%%%%%%%%%%%%%%%%%%%%%%%%%%%%%%%%%%%%%%%%%%%%%%%%%%%%%%%%%%%%%%%%%%%%%%%%%%%%%%%%%
%%%%%%%%%%%%%%%%%%%%%%%%%%%%%%%%%%%%%%%%%%%%%%%%%%%%%%%%%%%%%%%%%%%%%%%%%%%%%%%%%%%%%%%%%%%%%%%%
%%%%%%%%%%%%%%%%%%%%%%%%%%%%%%%%%%%%%%%%%%%%%%%%%%%%%%%%%%%%%%%%%%%%%%%%%%%%%%%%%%%%%%%%%%%%%%%%

\subsection{Data-Driven Robust Collision Avoidance}

We now formulate the collision avoidance constraints. We assume that the ego EV is a point object represented by its position $p_t:= [p_{x,t},p_{y,t}]^\intercal\in\mathbb{R}^2$ at a discrete time instant $t$. Let there be $M$ number of obstacles, and $\mathbb{O}^{(m)}_t\subset\mathbb{R}^{2}$ be the polytope enclosing the space occupied by $m$-th obstacle at time instant $t$. Then, to guarantee collision free movement of the ego EV over a finite horizon of length $N$, we require the following condition to hold:
\begin{align}
 \{p_k\} \cap\mathbb{O}^{(m)}_k=\emptyset, \quad \forall m\in [M],\forall k \in \{t,\ldots, t+N\},
\end{align} 
where $[M] := \{1,2,\ldots M\}$ for brevity of notation. 
The safety constraint can be formulated in terms of a distance function:
\begin{equation}
\dist(p_k,\mathbb{O}^{(m)}_k)\!=\!\min_{y} \{||y||_2:(p_k+y) \in \! \mathbb{O}^{(m)}_k\} \! > \! \dd_{\safe},\label{eq:safety_constraint_def}
\end{equation}
where $\dd_{\safe} > 0$ represents a safety margin on the distance between the ego EV and the obstacle, and $y$ is a vector in $\mathbb{R}^2$ space. We omit the superscript $(m)$ for better readability since the discussion is applicable for any of the $M$ obstacles. For simplicity, we assume a constant value of $\dd_{\safe}$ for each obstacle.

At a given time $t$, the polytope enclosing the space occupied by an obstacle is given by
\begin{align}
\mathbb{O}_t&=\{y\in \mathbb{R}^2:Ay\leq b\}.
\end{align}
Since we need to impose collision avoidance constraints over a prediction horizon of finite length $N$, we need to make suitable assumptions on how the occupancy set of such an obstacle evolves over time. In particular, we assume that the orientation of this obstacle in the inertial frame remains unchanged, but this set $\mathbb{O}_t$ may be translated by a random vector in $\mathbb{R}^2$. Specifically, the occupied space at discrete time instant $t+k$, denoted $\mathbb{O}_{t+k}$, is represented as
\begin{align}\label{dyn_obs}
\mathbb{O}_{t+k}= \mathbb{O}_t \oplus \Omega_{t}^{k},
\end{align}
where, $\oplus$ denotes Minkowski sum of sets, and $\Omega^{k}_{t}$ is the support of random vector that denotes $k$th step translation of the obstacle at time $t$. The support set should be large enough to capture the true realization of uncertainty with high confidence irrespective of the underlying distribution and should exclude scenarios that violate hard constraints such as road space constraints. In this regard, we present a data-driven approach to construct the set $\Omega^{k}_{t}$ from past data on obstacle position followed by robust collision avoidance constraints for this set. 

\subsubsection{Data-Driven Uncertainty Set Generation}\label{framework}

At each time, we observe the current position of the obstacle and obtain the change in its position from its position $k$ time steps earlier, for $k \in [N]$. Formally, let $\omega^j_{j-k} \in \mathbb{R}^2$ denote the difference between the position of the obstacle at current time $j$ and the position of the obstacle $k$ steps prior to the current time. Let us assume that translation data is being sampled at time interval $T_f$. For each $k \in [N]$, we collect $N_s$ number of such samples or scenarios denoted by $W^k_t=\{\omega^t_{t-kf^{'}},\ldots, \omega^{t-N_sT_f}_{t-N_sT_f-kf^{'}} \}\subseteq \mathbb{R}^2$ from recent past $N_s$ time points corresponding to current time $t$  where $f^{'}=\frac{T_f}{T_s}$. We now construct a polyhedral uncertainty set in $\mathbb{R}^2$, denoted by $\Omega(W^k_t)$, using Principal Component Analysis (PCA) following \cite{cheramin2021data,shang2019data}. We omit the superscript $k$ in the following paragraph for better readability. 

We first compute the sample mean of $W_t$ as $\bar{\omega}:=\frac{1}{N_s}\sum_{j=0}^{N_s}\omega_{t-jT_s-f^{'}}^{t-jT_s}$. We then subtract this mean from each scenario to obtain the ``centered" data $\Delta\omega_{t-jT_s-f^{'}}^{t-jT_s}=\omega_{t-jT_s-f^{'}}^{t-jT_s}-\bar{\omega}$ for every sample in $W_t$. We construct the data matrix $\textbf{Z}_0 \in \mathbb{R}^{N_s\times 2}$, where $j$-th row represents $(\Delta\omega_{t-jT_s-f^{'}}^{t-jT_s})^\intercal\in\mathbb{R}^{1\times2}$, and compute the sample covariance matrix
\begin{align*}
\textbf{C}&=\frac{1}{N_s-1}\textbf{Z}^\intercal_0\textbf{Z}_0.
\end{align*}
With $ \textbf{Z}^\intercal_0=U\Sigma V^\intercal $, we perform eigen-decomposition on \textbf{C} as
\begin{align*}
\textbf{C}&=U\left(\frac{\Sigma\Sigma^\intercal}{N_s-1}\right)U^\intercal=U\Lambda U^\intercal,
\end{align*}
where $U=[d_1,d_2]\in \mathbb{R}^{2\times2}$, $V\in \mathbb{R}^{N_s\times N_s}$ and $\Sigma\in \mathbb{R}^{2\times N_s} $. The columns of $U$, $d_1$ and $d_2$, are principal directions of the centered data. Following \cite{cheramin2021data,shang2019data}, we obtain a polyhedral uncertainty set approximation given by
\begin{align*}
& \hat{\Omega}_t(W_t)=\biggl\{ y:y=\bar{\omega}+\sum_{i=1}^{2}\big(\alpha_i \frac{\bar{\omega}_i}{||d_i||}d_i +(1-\alpha_i)\frac{\underline{\omega}_i}{||d_i||}d_i\big),\\
& \quad \qquad \qquad 0\leq \alpha_i \leq 1, \forall i\in \{1,2\} \biggl\}, \quad \text{where}\\
& \bar{\omega}_i=\max_{1\leq j\leq N_s}\left\lbrace \frac{\omega_{j0}^\intercal d_i}{||d_i||}\right \rbrace\in \mathbb{R},\quad \underline{\omega}_i=\min_{1\leq j \leq N_s}\left\lbrace \frac{\omega_{j0}^\intercal d_i}{||d_i||} \right \rbrace\in \mathbb{R}.
\end{align*}

We now present the following result from \cite{cheramin2021data} which states probabilistic guarantees on $\hat{\Omega}_t(W_t)$ defined above.
\begin{theorem}[\cite{cheramin2021data},Theorem 2]\label{thrm1}
If $N_{s}\geq \lceil \frac{1}{\epsilon}\frac{e}{e-1}(3+\ln\frac{1}{\beta})\rceil$, then we have $1-\beta$ confidence that any realisation $\omega$ belongs to the uncertainty set $\hat{\Omega}_t(W_t)$ with the probability of at least $1-\epsilon$, i.e.
\begin{align*}
  \mathbb{P}_{W_t}\{\mathbb{P}_{\omega}\{\omega\in \hat{\Omega}_t(W_t)\}\geq1-\epsilon\}\geq 1-\beta,  
\end{align*}
where $\epsilon\in(0,1), \beta \in (0,1)$.
\end{theorem}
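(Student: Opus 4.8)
The plan is to recognize the PCA box $\Omega(W)$ as the optimal solution of a sampled (scenario) convex program and then invoke the generalization bounds of the scenario approach, with the scenario-complexity parameter identified as $2m$ for the $m$-dimensional construction at hand.

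First I would fix the data-dependent preprocessing — the sample mean $\bar\omega$ and the principal directions $d_1,\dots,d_m$ — and observe that, in the orthonormal frame $\{d_i/\|d_i\|\}_{i=1}^m$, the set $\Omega(W)$ is precisely the axis-aligned bounding box of the centered samples $\{\omega_{j0}\}_{j=1}^{N_s}$: along direction $i$ it runs from $\underline\omega_i=\min_j \omega_{j0}^\intercal d_i/\|d_i\|$ to $\bar\omega_i=\max_j \omega_{j0}^\intercal d_i/\|d_i\|$. Hence $\Omega(W)$ is the solution of the scenario program that, for each $i$, minimizes $\bar\omega_i$ and maximizes $\underline\omega_i$ subject to enclosing every sample; this is a convex program with $2m$ scalar decision variables, and under a mild non-degeneracy (absolute continuity) assumption each of the $2m$ bounds is attained at a unique sample, so the number of support constraints equals $2m$.

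Second I would apply the scenario-approach theorem: for a convex scenario program with at most $\zeta$ support constraints, the violation probability $V(\Omega(W)):=\mathbb{P}_\omega\{\omega\notin\Omega(W)\}$ obeys $\mathbb{P}_W\{V(\Omega(W))>\epsilon\}\le\sum_{j=0}^{\zeta-1}\binom{N_s}{j}\epsilon^j(1-\epsilon)^{N_s-j}$. Setting $\zeta=2m$ and using the standard explicit sufficient condition that makes the right-hand side at most $\beta$, namely $N_s\ge\frac{1}{\epsilon}\frac{e}{e-1}\bigl(\zeta-1+\ln\frac1\beta\bigr)=\frac{1}{\epsilon}\frac{e}{e-1}\bigl(2m-1+\ln\frac1\beta\bigr)$, and rounding up to an integer, yields $\mathbb{P}_W\{\mathbb{P}_\omega\{\omega\in\Omega(W)\}\ge 1-\epsilon\}\ge 1-\beta$, which is the assertion.

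The main obstacle is that the principal directions $d_i$ and the centering $\bar\omega$ are themselves functions of $W$, so $\Omega(W)$ is a nonconvex, non-pointwise map of the data and the convex scenario machinery does not apply verbatim. I would resolve this along the lines of \cite{cheramin2021data,shang2019data}: either treat the PCA step as fixing a reference frame (conditioning on $U$, or estimating it from an independent batch), after which the residual bounding-box problem is genuinely convex in the $2m$ variables and the second step applies directly; or argue via exchangeability on $N_s+1$ i.i.d. draws, using that $\Omega(\cdot)$ is permutation invariant and that a fresh draw can fall outside the box only by strictly extending one of the $2m$ extreme projections, an event whose probability is controlled after summing over the $2m$ coordinates and over which of the $N_s+1$ points is held out. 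Verifying that the support-constraint count is exactly $2m$ (ruling out ties) and that the frame-fixing step does not inflate the violation probability is the crux of making the argument rigorous.
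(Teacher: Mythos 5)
The paper itself offers no proof of this statement---it is imported verbatim as Theorem~2 of \cite{cheramin2021data}---so there is no in-paper argument to compare against; your reconstruction, which views the PCA-aligned bounding box as (the solution of) a convex scenario program with $2m$ decision variables/support constraints and then invokes the standard sample-size bound $N_s \ge \frac{1}{\epsilon}\frac{e}{e-1}\bigl(2m-1+\ln\frac{1}{\beta}\bigr)$, is exactly the route taken in that reference and its antecedent \cite{shang2019data}. The subtlety you flag, that the centering $\bar{\omega}$ and the principal directions are themselves functions of $W$ so the set map is not literally the solution of a fixed convex scenario program, is real, but it is the same issue addressed (by casting the construction of the $2m$ directional bounds as the scenario program) in the cited source; your outline is consistent with that derivation, and as you correctly note, supplying rigor at the frame-fixing step is where any self-contained proof would have to do its work.
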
 

Thus, for each $k \in [N]$, we construct a polyhedral uncertainty set $\hat{\Omega}_t^k(W_t^k)$ from past data that contains $k$th step change in the position of the obstacle with high probability. The set $\hat{\Omega}_t^k(W_t^k)$ is oriented along the principal directions of the centered data and encloses all points in the dataset $W^k_t$. It can be equivalently expressed in the form 
\begin{align}
\hat{\Omega}^k_t(W^k_t)=\{\omega\in \mathbb{R}^2:\hat{G}^k_t \omega\leq \hat{h}^k_t\}. %\label{eq:uncertainty_set}
\end{align}

In addition, we know that the future obstacle position should remain confined to the road boundary. We inner-approximate the road area around an obstacle by a polyhedron $\Delta$. Then, ${\Omega}^k_t(W^k_t)$ should adhere to the condition
\begin{align}
    {\Omega}^k_t(W^k_t)\subseteq \Delta \oplus (-\mathbb{O}_t).
\end{align}
Consequently, we define the uncertainty set ${\Omega}^k_t(W^k_t)$ as
\begin{align}
    {\Omega}^k_t(W^k_t)= (\Delta \oplus (-\mathbb{O}_t) )\cap \hat{\Omega}^k_t(W^k_t),
\end{align}
which can be equivalently stated in the form
\begin{align}
{\Omega}^k_t(W^k_t)=\{\omega\in \mathbb{R}^2:{G}^k_t \omega\leq {h}^k_t\}. \label{eq:uncertainty_set}  
\end{align}

\subsubsection{Robust Collision Avoidance Constraints}

In this subsection, we present the robust collision avoidance constraints for the ego EV. We treat the data-driven uncertainty set $\Omega(W^k_t)$ defined in \eqref{eq:uncertainty_set} to be the support of the random variable that represents $k$th step translation of the obstacle in \eqref{dyn_obs}. We now reformulate the safety constraint \eqref{eq:safety_constraint_def}.

\begin{prop}\label{dynamic}
Assume that at time $t+k$, the obstacle occupancy set $\mathbb{O}_{t+k} := \mathbb{O}_t \oplus {\Omega}^k_t(W^k_t)$, the EV position is given by $p_{t+k}$, and let $\dd_{\safe} > 0$ be a desired safety margin between the controlled object and the obstacle. Then, we have
\begin{align*}
& \dist(p_{t+k},\mathbb{O}_{t+k})> \dd_{\safe}\\
\Leftrightarrow & \dist(p_{t+k},\mathbb{O}_t\oplus\{\omega^{k}\})> \dd_{\safe}, \quad \forall \omega^{k} \in \Omega^k_t(W^k_t) \\
\Leftrightarrow & \exists\lambda_k \geq 0,\mu_{k} \geq 0:\dd_{\safe}-(A p_{t+k} -b)^\intercal \lambda_k+\mu_{k}^\intercal h^k_t<0,\\
 & \qquad \qquad \qquad \quad ||A^\intercal\lambda_k||_2\leq 1,A^\intercal\lambda_k=G^{k^\intercal}_t\mu_{k},
\end{align*}
where $\Leftrightarrow$ indicates ``if and only if" condition.
\end{prop}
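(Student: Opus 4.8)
The plan is to establish the two equivalences in turn. For the first equivalence, I would unpack the definitions: $\mathbb{O}_{k+1} = \mathbb{O}_0 \oplus \Omega(W^k)$ means that $y \in \mathbb{O}_{k+1}$ if and only if $y = z + \omega$ for some $z \in \mathbb{O}_0$ and some $\omega \in \Omega(W^k)$. Hence $\dist(p_{k+1}, \mathbb{O}_{k+1}) > d_{\safe}$ says that no point within distance $d_{\safe}$ of $p_{k+1}$ lies in $\mathbb{O}_0 \oplus \Omega(W^k)$, which is precisely the statement that for every $\omega \in \Omega(W^k)$, no point within distance $d_{\safe}$ of $p_{k+1}$ lies in $\mathbb{O}_0 + \omega$, i.e. $\dist(p_{k+1}, \mathbb{O}_0 + \omega) > d_{\safe}$ for all $\omega \in \Omega(W^k)$. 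This is essentially a set-theoretic manipulation using $\dist(p, \mathbb{O}_0 + \omega) = \dist(p - \omega, \mathbb{O}_0)$ and the fact that $\dist$ over a union/Minkowski sum is the infimum over the translating parameter. I would note the (harmless) index shift between $\omega_k$ and $\omega_{k-1}$ appearing in the statement and treat $\omega \in \Omega(W^k)$ as the quantified variable.

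For the second equivalence, the idea is to convert the robust (``for all $\omega$'') distance condition into a finite set of convex constraints via LP duality. First I would rewrite $\dist(p_{k+1}, \mathbb{O}_0 + \omega) > d_{\safe}$. Using $\mathbb{O}_0 = \{y : Ay \le b\}$, we have $\dist(p_{k+1}, \mathbb{O}_0 + \omega) = \min_t \{\|t\|_2 : A(p_{k+1} + t - \omega) \le b\}$. This is a convex program in $t$; its optimal value exceeds $d_{\safe}$ iff a certain robust feasibility fails, and the cleanest route is to dualize. The dual of $\min \{\|t\|_2 : A t \le b - A p_{k+1} + A\omega\}$ (by conic/Lagrangian duality, with the norm handled via its conjugate) yields: the distance is $> d_{\safe}$ iff there exists $\lambda_k \ge 0$ with $\|A^\intercal \lambda_k\|_2 \le 1$ and $(b - A p_{k+1} + A\omega)^\intercal \lambda_k < -d_{\safe}$, i.e. $d_{\safe} - (A p_{k+1} - b)^\intercal \lambda_k + (A\omega)^\intercal \lambda_k < 0$. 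I would justify strong duality by noting the problem is a feasibility-type convex program and invoking the standard result used in \cite{zhang2020optimization} for polytope-to-point distance reformulations (Slater-type conditions hold generically, or one argues directly).

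The remaining—and genuinely delicate—step is to discharge the universal quantifier over $\omega \in \Omega(W^k) = \{\omega : G_k \omega \le h_k\}$. Having fixed a dual variable $\lambda_k$, the term $(A\omega)^\intercal \lambda_k = (A^\intercal \lambda_k)^\intercal \omega$ must satisfy the bound for all $\omega$ in the polyhedron, so I would take the worst case: $\max_{\omega : G_k \omega \le h_k} (A^\intercal \lambda_k)^\intercal \omega$. Dualizing this inner LP (this is the second duality, and the place where the multiplier $\mu_k \ge 0$ enters) gives $\min_{\mu_k \ge 0} \{\mu_k^\intercal h_k : G_k^\intercal \mu_k = A^\intercal \lambda_k\}$. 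Substituting the dual optimum back, the robust constraint becomes: there exist $\lambda_k \ge 0$, $\mu_k \ge 0$ with $A^\intercal \lambda_k = G_k^\intercal \mu_k$, $\|A^\intercal \lambda_k\|_2 \le 1$, and $d_{\safe} - (A p_{k+1} - b)^\intercal \lambda_k + \mu_k^\intercal h_k < 0$, which is exactly the claimed system. The main obstacle I anticipate is handling the two nested dualities carefully—in particular, ensuring strong duality holds at both stages (the outer one involves a non-polyhedral Euclidean-norm objective, the inner one requires $\Omega(W^k)$ to be nonempty and bounded, which holds by construction in Section~\ref{framework}) and correctly tracking that the ``$\forall \omega$'' collapses to a single set of existence constraints rather than a quantifier that survives into the final formulation.
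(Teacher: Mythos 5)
Your proposal follows essentially the same route as the paper's proof: the point-to-polytope distance is replaced by its dual form (a maximization over $\lambda_k\ge 0$ with $\|A^\intercal\lambda_k\|_2\le 1$) as in \cite[Proposition 1]{zhang2020optimization}, and the worst case over the polyhedral set $\Omega(W^k)=\{\omega: G_k\omega\le h_k\}$ is then handled by LP strong duality, introducing $\mu_k$ and producing exactly the stated constraint system. One caveat, which applies equally to the paper's own argument: fixing $\lambda_k$ before taking the worst case over $\omega$ swaps the quantifiers ($\forall\omega\,\exists\lambda$ becomes $\exists\lambda\,\forall\omega$, i.e.\ the min--max inequality), so as sketched both your argument and the paper's establish the sufficiency direction of the second ``$\Leftrightarrow$'' (the paper indeed concludes only with a sufficient condition); closing the full equivalence would require a minimax-equality argument, e.g.\ exploiting compactness and convexity of $\Omega(W^k)$ and bilinearity of the coupling term.
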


The proof is presented in Appendix \ref{appendix1}. The constraints given in the above proposition guarantee that at time $t+k+1$, the ego EV is at least $\dd_{\safe}$ distance away from the occupancy region of the obstacle for all possible change in its position from the data-driven uncertainty set $\Omega^k_t(W^k_t)$. Since the true displacement of the obstacle is expected to lie within the uncertainty set with high probability, it follows from Theorem \ref{thrm1} that the distance between the ego EV and the obstacle is guaranteed to be larger than $\dd_{\safe}$ with high probability.

A static obstacle remains at its initial position for all time. In that case, uncertainty set $\Omega^k_t(W^k_t) := \{0\}$. Consequently, constraints in Proposition~\ref{dynamic} reduce to
\begin{align*}
  & \dist(p_{t+k},\mathbb{O}_t)> \dd_{\safe}\\
\Leftrightarrow & \exists \lambda_k \geq 0:(A p_{t+k}-b)^\intercal \lambda_k>\dd_{\safe},
||A^\intercal\lambda_k||_2\leq1, 
\end{align*}
which coincides with the constraints developed in \cite{zhang2020optimization} for static obstacles. 

%%%%%%%%%%%%%%%%%%%%%%%%%%%%%%%%%%%%%%%%%%%%%%%%%%%%%%%%%%%%%%%%%%%%%%%%%%%%%%%%%%%%%%%%%%%%%%%%
%%%%%%%%%%%%%%%%%%%%%%%%%%%%%%%%%%%%%%%%%%%%%%%%%%%%%%%%%%%%%%%%%%%%%%%%%%%%%%%%%%%%%%%%%%%%%%%%
%%%%%%%%%%%%%%%%%%%%%%%%%%%%%%%%%%%%%%%%%%%%%%%%%%%%%%%%%%%%%%%%%%%%%%%%%%%%%%%%%%%%%%%%%%%%%%%%

\subsection{Constrained Optimal Control Problem Formulation} 

We now formulate the constrained optimal control problem for energy-aware and safe motion planning of the ego EV. The goal of the ego EV is to minimize travel time and maintain the lane while ensuring safety. The set points ($v_{x,\max}, e_{y,\dest})$ are assumed to be provided by the route level planner module or the driver. The $e_{y,\dest}$ component captures the geometry of the road. Let the state $x_t$ of ego EV be known at the current time $t$. The notation $u_{t:(t+N-1)}:= \{u_t,u_{t+1},\ldots,u_{t+N-1}\}$ denote a sequence of control inputs over a finite horizon of length $N$, and let $x_{t+1:t+N} := \{x_{t+1},x_{t+2},\ldots,x_{t+N}\}$ be the resultant state trajectory. Consequently, we define the cost incurred by this trajectory as
\begin{align}
J & = \sum_{k=t}^{t+N-1}\Big[Q_1\Bigl(\frac{e_{y,k}\!-\!e_{y,\dest}}{e_{y,\max}}\Bigl)^2\!+\! Q_2(\gamma_{k}\!-\!\gamma_{\max})^2\! \nonumber
\\ & \quad +\!R_1\Bigl(\frac{a_k}{a_{\max}}\Bigl)^2\nonumber
  \!+\!R_2\Bigl(\frac{\delta_k}{\delta_{\max}}\Bigl)^2+\!R_3\Bigl(\frac{d_k}{d_{\min}}\Bigl)^2 +\! \Delta R_1\Bigl(\Delta a_k\Bigl)^2 \nonumber
  \\ & \quad +\!\Delta R_2\Bigl(\Delta \delta_k\Bigl)^2 \Big] \!+\! P_1\Bigl(\frac{v_{x,t+N}\!-\!v_{x, \max}}{v_{x,\max}}\Bigl)^2\! \nonumber
  \\ &\quad  \!+\!P_2\Bigl(\frac{e_{y,t+N}\!-\!e_{y,\dest}}{e_{y,\max}}\Bigl)^2 \!+\!P_3(\gamma_{t+N}\!-\!\gamma_{\max})^2, \label{eq:cost_function}
\end{align}
where $Q_i, R_i, \Delta R_i$, and $P_i$ denote the coefficients of the cost function pertaining to cost on the deviation of states from desired state values, cost of control inputs, penalty term to minimize jerk, and terminal cost, respectively. The jerk minimization terms penalize change in control inputs, formally defined as $\Delta u_k:= [\Delta a_k, \Delta \delta_k, \Delta d_k]^\intercal$ where $\Delta a_k= a_k -a_{k-1}$, $\Delta \delta_k=\delta_k-\delta_{k-1}$ and $\Delta d_k=d_k-d_{k-1}$.

In particular, within the summation, the first term minimizes deviation from lane center, the second term increases the cost when the state of energy of the battery $\gamma$ decreases from the maximum allowed value $\gamma_{\max}$, the next three terms penalize large control efforts followed by rest two terms which penalize longitudinal and lateral jerk. The terms in terminal cost penalize deviation from target speed, deviation from lane center, and deviation from $\gamma_{\max}$ respectively.

The resulting finite-horizon constrained optimal control problem can be compactly stated as:
\begin{subequations}\label{eq:MPC_main}
\begin{align}
\min \quad& \sum_{j=t}^{t+N-1} \Big[||x_j-x_{\dest}||^2_Q +||u_j||^2_R +||\Delta u_j||^2_{\Delta R} \nonumber\\&+ \sum_{m=1}^{M}C_s(\xi^{(m)})^2\Big] +||x_{t+N}-x_{\dest}||^2_P, \label{eq:cost}\\
\text{w.r.t.} \quad & x_{t:(t+N)},u_{t:(t+N-1)},\lambda^{(1)}_{1:N}\ldots\lambda^{(M)}_{1:N},\\&\mu^{(1)}_{1:N}\ldots\mu^{(M)}_{1:N} , \xi^{(1)}, \ldots, \xi^{(M)}, \nonumber \\
\text{s.t.}\quad & x_{t+k+1}=x_{t+k}+T_sf(x_{t+k},u_{t+k}), \label{dyn}\\
& u_{\min}\leq u_{t+k} \leq u_{\max}, \label{c11} \\
& x_{\min} \leq x_{t+k} \leq x_{\max}, \label{c1}\\
& -\tau_{\mathrm{m},\max}(x_{t+k})\leq \tau_{\mathrm{m},t+k}\leq \tau_{\mathrm{m},\max}(x_{t+k}), \label{c2}\\ 
& \dd_{\safe}-(A^{(m)}p_{t+k}-b^{(m)})^\intercal \lambda^{(m)}_k \nonumber \\&+\mu_{k}^{(m)^\intercal} h^{{k}^{(m)}}_t\leq\xi^{(m)},\label{safety}\\ 
& \xi^{(m)}\geq0, \quad   \lambda^{(m)}_k\geq0,\quad  ||A^{(m)^\intercal}\lambda^{(m)}_k||_2\leq1, \label{eq}\\
& \mu_{k}^{(m)^\intercal} G^{{k}^{(m)}}_t=\lambda^{(m)^\intercal}_k A^{(m)}, \quad  \mu_{k}^{(m)}\geq0, \label{cs2}\\
& x_t \quad \text{given},
\end{align}
\end{subequations}
where the notation $||v||_W$ in \eqref{eq:cost} denotes quadratic function $v^\intercal W v$, $Q:=\diag(0, Q_1/e_{y,\max}^2, 0, 0, 0, 0, Q_2, 0, 0, 0)$,$R:=\diag(R_1/a_{\max}^2,R_2/\delta_{\max}^2,R_3)$, $\Delta R:=\diag(\Delta R_1,\Delta R_2,0)$  and $P:=\diag(0, P_2/e_{y,\max}^2, 0, P_1/v_{x,\max}^2, 0, 0, P_3, 0, 0, 0)$.  \

The equality constraint representing EV dynamics in \eqref{dyn} holds for $k \in \{0,1,\ldots,N-1\}$, the constraints on input \eqref{c11} encode \eqref{12}-\eqref{13}, the constraint \eqref{c2} encodes \eqref{trq} (and is a function of longitudinal acceleration input $a$), and hold for $k \in \{0,1,\ldots,N-1\}$. Similarly, the bounds on state variables \eqref{c1} encode \eqref{sb1}-\eqref{sb3} and these constraints hold for $k \in [N]$. Finally, the constraints \eqref{safety}-\eqref{cs2} encode robust collision-avoidance constraints derived in Proposition \ref{dynamic} for $M$ dynamic obstacles and hold for all $m \in [M]$ and $k \in [N]$.  

In order to ensure that the MPC formulation remains feasible at all time steps and the algorithm terminates without exceeding the maximum allowed iterations, slack variables $\xi^{(m)}$ are included in the safety constraints (\ref{safety}) for each of the $m$ obstacles. These slack variables are penalized with a very large weight $C_s$ in the cost function (\ref{eq:cost}).

At every sampling instant $t$, the position and orientation of each obstacle are obtained, and the set $\mathbb{O}_t$ is constructed. Then, a set of new samples is collected after computing the deviation of the obstacles from their past positions. Then, the occupancy sets \eqref{eq:uncertainty_set} are recomputed using the PCA approach. Then, the above optimization problem, encoding robust collision avoidance constraints and penalizing energy consumption, is solved. The first component of the control input is applied to the EV, the state and obstacles are allowed to evolve, and the process repeats in a receding horizon manner.

The proposed MPC controller is in contrast with approaches where a path planning algorithm is used to compute a collision-free path, which is then tracked by a tracking controller. Essentially, we combine both trajectory planning and path tracking functionalities in the proposed approach. The proposed approach is ``robust" in the sense that if collision avoidance constraints \eqref{safety}-\eqref{cs2} are satisfied, then the distance between the ego EV and the dynamic obstacle is guaranteed to be larger than the safety margin for all possible perturbations of the obstacle position from the set $\Omega^k_t(W^k_t)$ constructed from past data of obstacle motion. Thus, the constraints derived in this paper are stronger compared to the case where the distance between the EV and a predicted future trajectory of the obstacle is guaranteed to be larger than the safety margin. We further clarify that the proposed approach is distinct from robust MPC, where robustness concerns uncertainty in the plant dynamics.

%%%%%%%%%%%%%%%%%%%%%%%%%%%%%%%%%%%%%%%%%%%%%%%%%%%%%%%%%%%%%%%

\section{Simulation Results}\label{results}

In this section, we demonstrate the effect of curved roads on energy
consumption followed by an illustration of the effectiveness of the proposed energy-aware and safe motion planning framework through simulations. The EV model is developed in Matlab 2021b based on reference applications \cite{EV}, \cite{DLC}. The values of various parameters related to vehicle dynamics and cost function in \eqref{eq:MPC_main} used in the simulation are given in Table~\ref{para}, unless stated otherwise.

\begin{table}[htbp]
\centering
\begin{tabular}{|l|r||l|r|}
\hline
Parameters & Values & Parameters & Values\\
\hline
$m$& $1611$ kg & $a_{\max}, a_{\min}$ & $4.5, -4$ $\text{m}/\text{s}^2$\\ 
$I_\mathrm{z}$& $3000$ kg$\text{m}^2$ &$\delta_{\max}, \delta_{\min}$ & $0.5, -0.5$ rad\\
$l_F$&$1.188$ m &$v_{x,\min}$ & 0 $ \text{m}/\text{s}$\\ 
$l_R$&$1.512$ m &$e_{y,\max}, e_{y,\min}$ & $3.5, -3.5 $ m \\ 
$C_{\alpha F}$ & $6.3\times10^4 \text{N}/\text{rad}$ &$d_{\max}, d_{\min}$ & $0, -5.75$ $\text{m}/\text{s}^2$\\
$N_\mathrm{d}$&$7.94$ & $\gamma_{\min}, \gamma_{\max}$ &$0.1, 0.9$ \\ 
$C_{\alpha R}$ & $6.3\times10^4 \text{N}/\text{rad}$ &$N$ & $20$\\
$r_\mathrm{w}$&$0.33$ m &$T_f, T_s$ & $0.01$ s, $0.1$ s\\
$E_\mathrm{b}$&$54.28$ kWh & $\eta_\mathrm{b}$ & $1$\\
$C_\mathrm{d}$&$0.28$ kWh & $A_\mathrm{f}$ & 2.27 $\text{m}^2$\\
$\rho_\mathrm{a}$&1.24 $\text{kg}/\text{m}^3$  & $g$ & 9.8 $\text{m}/\text{s}^2$\\
$C_\mathrm{r}$&0.01 & $\eta_\mathrm{i}$ & 1 \\ \hline
$R$&\multicolumn{3}{|r|}{$\diag(0.00025,0.00125,0.05)$}\\
$\Delta R$&\multicolumn{3}{|r|}{$\diag(0.0025,0.0125,0)$}\\
$P$&\multicolumn{3}{|r|}{$\diag(0,0.0019,0,0.00025/0.0015,0,0,1,0,0,0)$}\\
$Q$&\multicolumn{3}{|r|}{$\diag(0,0.0019,0,0,0,0,1,0,0,0)$}\\
$C_s$&\multicolumn{3}{|r|}{$10^5$}\\\hline
\end{tabular}
\caption{Simulation Parameters} \label{para}
\end{table}

The coefficients for $P_\mathrm{b}$ as a function of motor torque ($\tau_\mathrm{m}$) and motor speed ($\omega_\mathrm{m}$) as stated in \eqref{eq:battery_power_approx}, and the coefficients of the bounds on motor torque $\tau_{\mathrm{m},\max}$ as a function of $v_x$ are obtained through least squares curve fitting. These coefficients and accuracy of fit are given in Tables \ref{tab:Pb_map} and \ref{tab:trq_lim}, respectively. The coefficients $c_3$ and $c_7$ of $P_\mathrm{b}$ are approximately zero and not included in the table.

\begin{table}[htbp]
    \centering
    \begin{tabular}{|c|c|c|c|c|c|}
    \hline
        Coefficients & $c_1$&$c_2$&$c_4$&$c_5$&$c_6$  \\ \hline
         Values & $-1144$ &$0.7604$  & $0.0043$ & $1$&$0.0721$\\ \hline \hline
         R-squared value & \multicolumn{5}{|c|}{0.999}\\ \hline
    \end{tabular}
    \caption{Estimated coefficients of the polynomial approximation of battery output power $P_\mathrm{b}$ in \eqref{eq:battery_power_approx}.}
    \label{tab:Pb_map}
\end{table}

\begin{table}[htbp]
    \centering
    \begin{tabular}{|c|c|c|c|c|}
    \hline
        Coefficients & $\xi_1$&$\xi_2$&$\xi_3$&$\xi_4$  \\ \hline
         Values &$0.0036$&$-0.3661$&$3.663$&$454.2$ \\ \hline \hline
         R-squared value & \multicolumn{4}{|c|}{0.974}\\ \hline
    \end{tabular}
    \caption{Estimated coefficients of the polynomial approximation of motor maximum torque limit $\tau_{\mathrm{m},\max}$ in \eqref{trq}.}
    \label{tab:trq_lim}
\end{table}

The Automated Driving Toolbox of Matlab is used to generate scenarios for evaluating the performance of the proposed framework.  At every time instant, the uncertainty set parameters $G$'s and $h$'s are obtained using Multi Parametric Toolbox \cite{MPT3}. Nonlinear solver IPOPT \cite{wachter2006implementation} is used to solve the optimization problem in Python using Casadi modelling language \cite{Andersson2019}. All computations were carried out on a Workstation Desktop with $128$ GB RAM and Intel Xeon(R) Gold 6146 CPU @ $3.20\text{GHz}\times48$ Processor. The computation was carried out in single-threaded mode using only one core of the CPU. A block diagram of the closed-loop system implementation under the proposed approach is shown in  Figure \ref{fig:arch}. The collision avoidance constraints are active when the obstacle is within a range of $100$ m and at least $N_s$ samples of obstacle displacement data have been gathered. When the specified requirements are not true, collision avoidance constraint is inactive and dropped from the optimization problem with the value of variable $\dd_{\EO}$ (which represents the distance between ego EV and obstacle) set to $0$ in the figures. 

%%%%%%%%%%
\begin{figure}
    \centering
    \begin{tikzpicture}[impt/.style={rectangle, draw=red, minimum size=5mm},]
    \node[impt] (a) {Plant \& Environment};
    \node[impt] (gen) [below=1cm of a]{Uncertainty Set Generator};
    \node[impt] (cont) [left=2cm of a]{MPC};
    \draw[->, thick] (cont.east) to node[above]{$u_t$} (a.west);
    \draw[->,thick](gen.west) -| node[near start,above]{$(G^k_t,h^k_t)$} (cont.south);
    \draw[->,thick](a.north) |-(0,7.5mm)   -| node[near start,above]{$x_t$} (cont.north);
    \draw[->,thick](a.south) to node[midway,right]{$W^k_t$} (gen.north);
    \node[draw, thick, blue,densely dashed, fit=(a) (gen), label=above right:\textcolor{blue}{Matlab}] (box) {};
    \node[draw, thick, blue,densely dashed, fit=(cont), label=below right:\textcolor{blue}{Python}] (box) {};
\end{tikzpicture}
    \caption{Implementation architecture. Here, the index $k\in [N]$.}
    \label{fig:arch}
\end{figure}

%%%%%%%%

The ego EV plans and follows a trajectory that avoids static as well as dynamic obstacles while trying to keep the level of stored energy $\gamma$ close to $\gamma_{\max} = 0.9$. We have chosen scenario sampling time $T_f=0.01$s, controller sampling time $T_s = 0.1$s, and prediction horizon $N = 20$ for the MPC formulation \eqref{eq:MPC_main}. Note that, in addition to the motion planning controller, a motion-tracking controller operating at a significantly higher frequency is required for real-time implementation of the proposed scheme. However, the focus of this paper is on the motion planning aspect, and the lower-level control aspects associated with the motion-tracking controller are not addressed here.
 
We now define different components of energy consumption and recovery for an EV. The battery energy consumption is denoted by $E_B$. The forces at the wheels are responsible for maneuvering the EV over a specified duration  $[t_0,t_f]$. The traction energy consumed from the powertrain over a planar trajectory is denoted by $E_T$, defined as
\begin{align}
E_T&\coloneqq\int_{t_0}^{t_f} \tau_{\mathrm{m}}\omega_\mathrm{m} \,dt=\int_{t_0}^{t_f} P_T \,dt.    
\end{align}
It is emphasized that the EV model given in (\ref{1}-\ref{1_end}) assumes the wheel camber angle to be zero and ignores wheel slippage loss, friction loss at the wheel, and rotational energy consumed at the wheel. Therefore, 
\begin{align}
    E_T&= \int_{t_0}^{t_f} F_\mathrm{x} v_x \,dt.
\end{align}
A part of $E_T$ is consumed (dissipated) by  lateral forces at wheels and is denoted by $E_{WL}$ where,
\begin{align}
  E_{WL}&\coloneqq \int_{t_0}^{t_f} (F_{F,y}v_{F,y}^w+F_{R,y}v_{R,y}^w) \,dt=\int_{t_0}^{t_f} P_{WL} \,dt. 
\end{align}
Here, $v_{F,y}^w$ ($v_{R,y}^w$) is a component of wheel velocity along the transverse axis of the front (rear) wheel. In the vehicle body coordinate frame, a component of defined $E_{WL}$ leads to a difference between traction energy from the powertrain consumed over the planar maneuver and traction energy consumed over the longitudinal maneuver. Here, the latter is denote by $E_{Long}$ and is defined as
\begin{align}
    E_{Long}&\coloneqq \int_{t_0}^{t_f} (m\dot{v}_xv_x+F_\mathrm{a}v_x+F_\mathrm{r}v_x) \,dt=\int_{t_0}^{t_f} P_{Long} \,dt.
\end{align}
The stated difference is generally not considered in the literature due to the assumption of EV motion along a straight road.
The remaining component of $E_{WL}$ contributes to traction energy consumed (dissipated) over lateral maneuver of EV, which is denoted by $E_{Lat}$ and defined as
\begin{align}
    E_{Lat}&\coloneqq\int_{t_0}^{t_f} (m\dot{v}_yv_y+I_\mathrm{z}\dot{r}r) \,dt=\int_{t_0}^{t_f} P_{Lat} \,dt.
\end{align}
We denote energy consumed (dissipated) by friction braking at wheel over the planar maneuver as $E_{Brake}$. It can be shown by coordination transformation from wheel to vehicle body frame \cite{latdyn} that
\begin{align}
   E_T+E_{WL}+E_{Brake}\approx E_{Long}+E_{Lat}. \label{energy_rel}
\end{align}
 We denote traction power consumed over planar maneuver and total power consumed over EV planar
maneuver as $P_{Man}$ and $P_{Total}$ respectively. Here,
\begin{align}
     P_{Man}&=P_{T}+P_{WL},\\
   P_{Total}&=P_{Man}+P_{Brake}.
\end{align}

\subsection{Effect of Road Curvature on Energy Consumption}

We now demonstrate the effect of lateral dynamics on energy consumption of EVs on benchmark drive cycles along a curved road. An EV model is simulated to track the FTP-75 drive cycle along a road characterized by different radii of curvature, with speed limited by a maximum speed limit \cite{irc},\cite{amata2008abrupt}. This is a hypothetical case, but mend to convince the reader of the extent of energy consumption in lateral motion since such results are not found in literature.

\begin{figure}[htbp]
    \centering
    \includegraphics[trim={0 0 0 0cm},clip,width=0.45\textwidth]{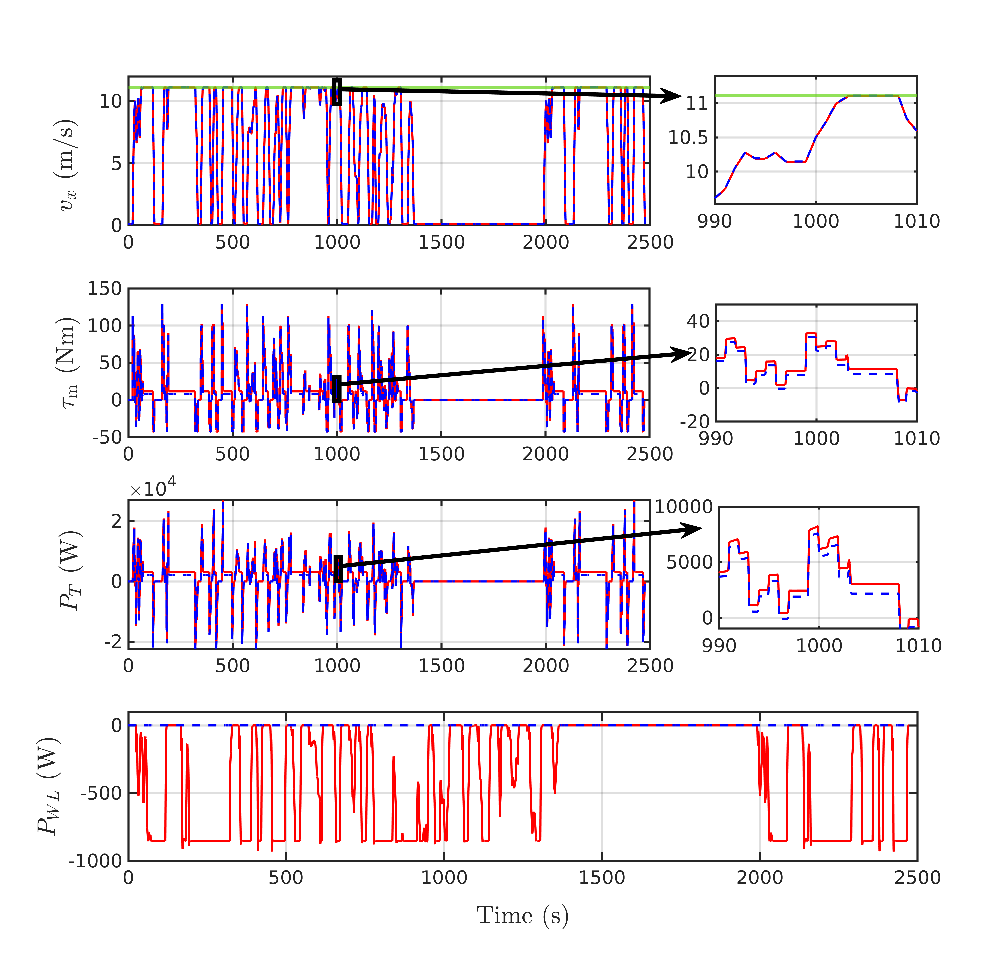}
    \caption{\footnotesize Longitudinal speed ($v_x$), motor torque ($\tau_{\mathrm{m}}$), traction power ($P_T$) and power consumed by wheel lateral forces ($P_{WL}$) profiles of an EV that maneuvers along a straight road (blue dashed) and a curved road with 60 m radius of curvature (red solid).}
    \label{fig:radius60plot}
\end{figure}

\begin{table}[htbp]
\caption{Energy consumption and estimated range for different radius of curvature the road for an EV with $54.28$ kWh battery.}
\begin{tabular}{|p{1.8cm}|p{0.65cm}|p{0.65cm}|p{0.65cm}|p{0.65cm}|p{0.65cm}|p{0.65cm}|}
\hline
\textbf{Parameters}&\multicolumn{3}{|c|}{\textbf{Straight road}}&\multicolumn{3}{|c|}{\textbf{Curved road}}\\
\hline
Drive Cycle & I & II & III & I & II & III \\
\hline
Radius (m) & $\infty$& $\infty$& $\infty$& 60& 250 &500 \\
\hline
Distance (km)& 14 & 17.36 & 17.86 & 14 & 17.36 & 17.86\\ \hline
Battery Energy (kWh) & $0.99$ & $1.52$ &$1.64$& $1.26$ & $1.62$ &$1.69$ \\ \hline
Estimated Range (km) & $766.2$& $621.1$&$589.9$ & $602.8$ & $580.5$ &$575.5$ \\ \hline
\end{tabular}
\label{tab_curve}
\end{table}
Figure \ref{fig:radius60plot} illustrates the occurrence of significant power dissipation due to the lateral motion of the wheel on a curved route as opposed to a straight one when following a particular drive cycle. As a result, the estimation of the driving range of EV without taking the radius of curvature of the road into account is exaggerated as shown in Table \ref{tab_curve}. 

Additionally, we validate this observation by simulating the EV model to track real-world recorded drive cycle data extracted from the Next Generation Simulation (NGSIM) highway driving dataset \cite{huang2021driving}. The power dissipation profiles due to the lateral motion of the wheel on a curved road (with a radius of curvature $250$ m) and a straight road for this realistic speed profile are shown in Fig.~\ref{fig:radius250plot}. The corresponding estimated driving range for the EV is given in Table~\ref{tab_curve_250}. 

In both studies, we observe a reduction in driving range when an EV tracks the same drive cycle on a curved road as opposed to a straight road. Additionally, the difference in estimated range increases with a decrease in the radius of curvature. Therefore, it is essential to consider the effect of lateral motion on energy consumption. 

\begin{figure}[htbp]
    \centering
    \includegraphics[trim={0 0 0 0cm},clip,width=0.45\textwidth]{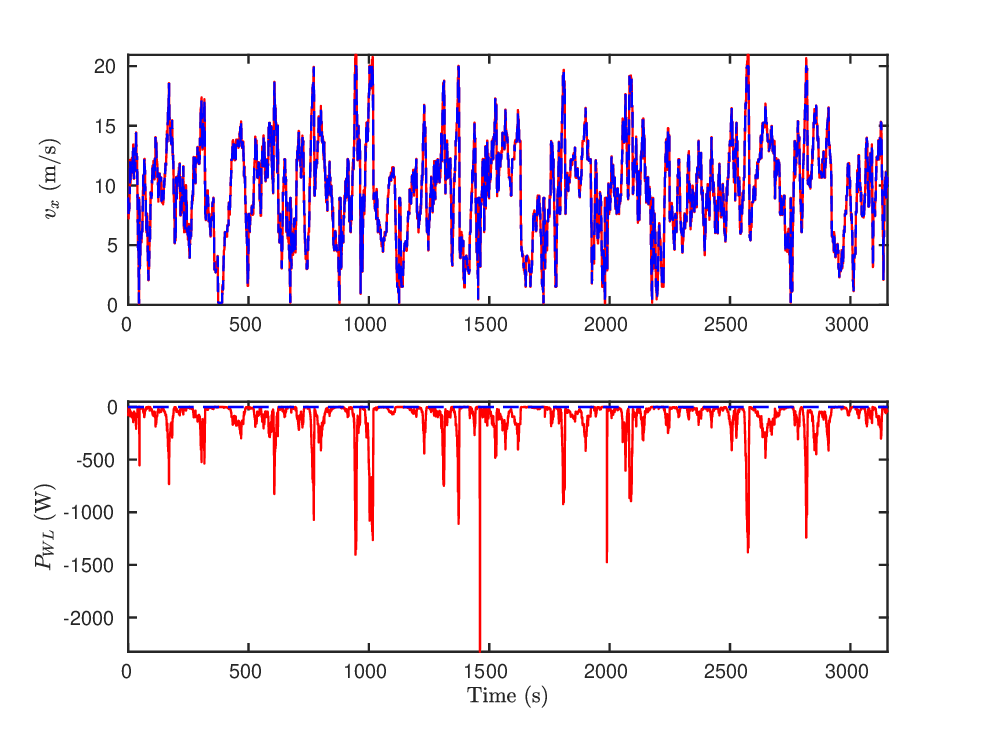}
    \caption{\footnotesize Longitudinal speed ($v_x$) and power consumed by wheel lateral forces ($P_{WL}$) profiles of an EV that maneuvers along a straight road (blue dashed) and a curved road with 250 m radius of curvature (red solid).}
    \label{fig:radius250plot}
\end{figure}

\begin{table}[htbp]
\caption{Energy consumption and estimated range on a straight and a curved road for an EV tracking drive cycle extracted from the NGSIM dataset with a $54.28$ kWh battery.}
\centering
\begin{tabular}{|p{3cm}|p{2cm}|p{2cm}|}
\hline
\textbf{Parameters}&\textbf{Straight road}&\textbf{Curved road}\\
\hline
Radius (m) & $\infty$& 250 \\
\hline
Distance (km)& 30.56& 30.57\\ \hline
Battery Energy (kWh) & $3.19$  &$3.29$ \\ \hline
Estimated Range (km) & $519.5$& $504.1$ \\ \hline
\end{tabular}
\label{tab_curve_250}
\end{table}

Next, we evaluate the effectiveness of the proposed energy-aware safe motion planning framework via simulations for different scenarios, including motion on a straight road, a curved road, multiple obstacles, and real-world traffic.

%%%%%%%%%%%%%%%%%%%%%%%%

%\onecolumn
\begin{figure*}[t]
    \centering
    \begin{subfigure}[b]{0.32\textwidth}
    \includegraphics[trim={0 0cm 0 0cm},clip,width=\textwidth,height=6.5cm]{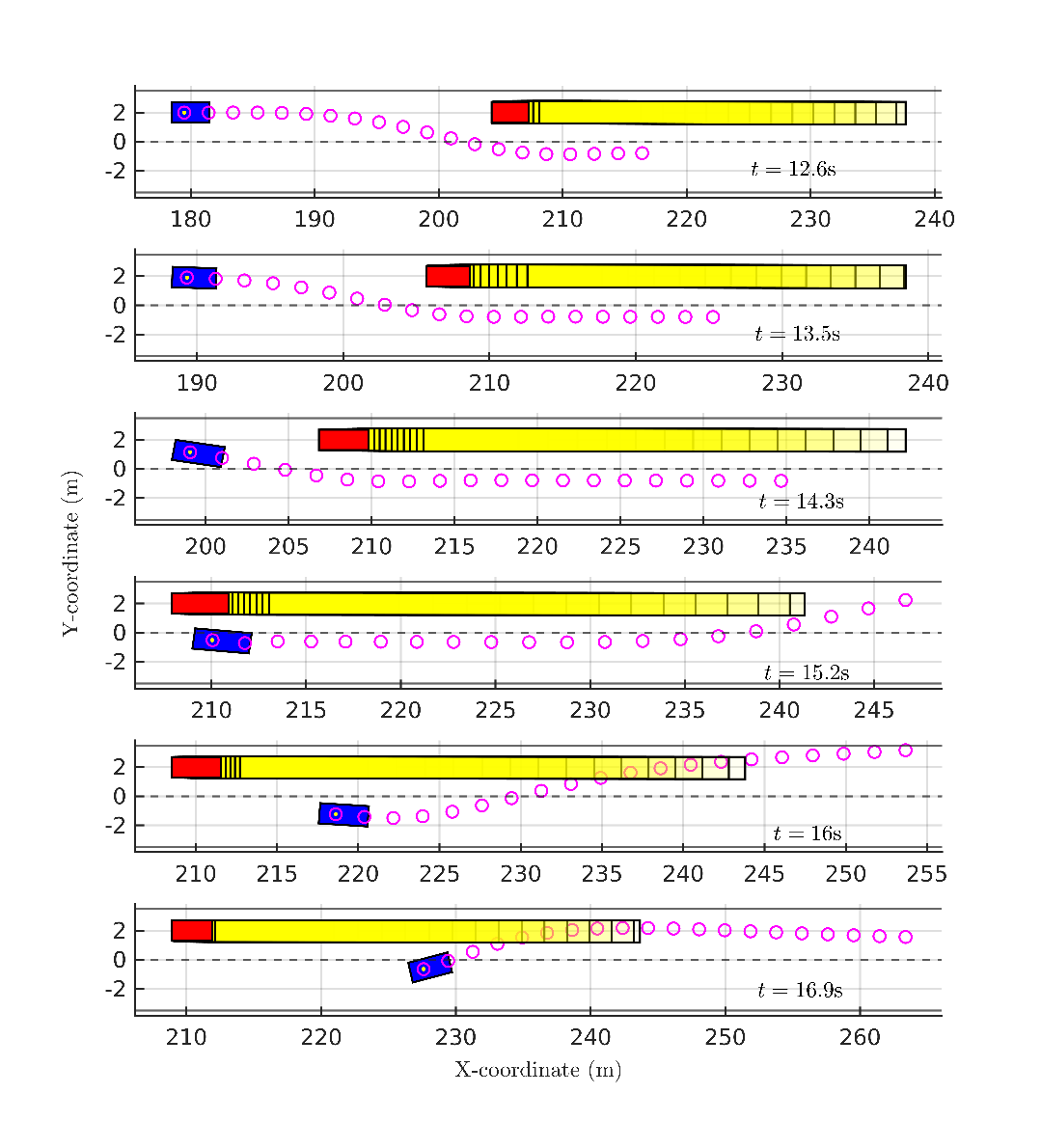}%{figures/ovtk_traj_wo_soe.eps}
    \caption{\footnotesize Evolution of vehicle position in the EU case.}
% \caption{Subplots (downwards) show EV trajectory evolution when energy consumption is not considered at time instants $1.2s$, $ 2.5s$, $3.8 s $,  $5 s$, $5.9 s $ and $6.3 s$.} 
\label{traj_ovtk_straight}
    \end{subfigure}
    \hfill
      \begin{subfigure}[b]{0.32\textwidth}
 \includegraphics[trim={0 0cm 0 0cm},clip,width=\textwidth,height=6.5cm]{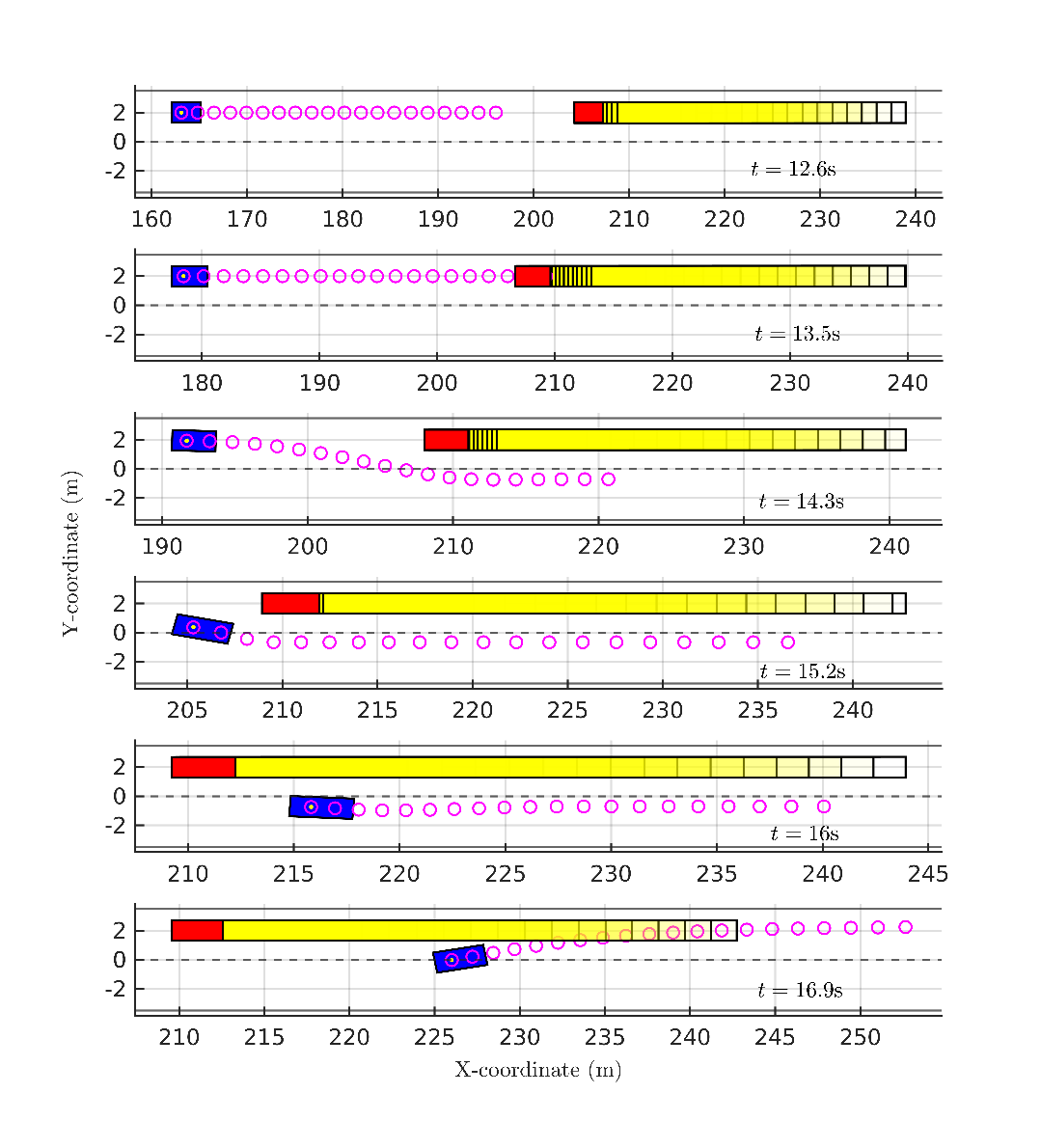}%{figures/ovtk_traj_w_soe.eps}
 \caption{\footnotesize Evolution of vehicle position in the EA case.}
% \caption{Subplots (downwards) show EV trajectory evolution when energy consumption is not considered at time instants $0s$, $ 1.5s$, $3.1 s $,  $4.7 s$, $6.2 s $ and $7.8 s$.} 
\label{traj_ovtk_soe_straight}
    \end{subfigure} 
   \hfill
      \begin{subfigure}[b]{0.32\textwidth}
 \includegraphics[trim={0 0cm 0 0cm},clip,width=\textwidth,height=6.5cm]{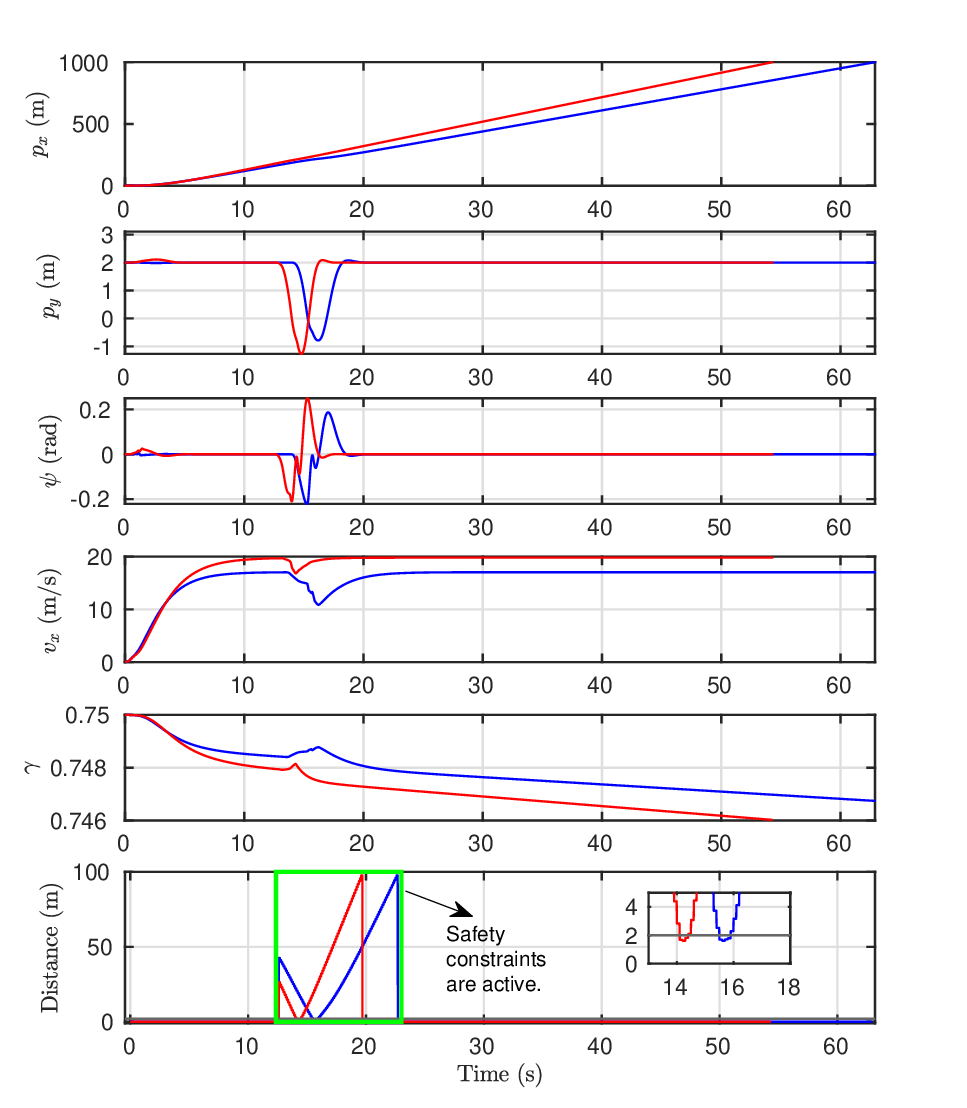}%{figures/ovtk_traj_w_soe.eps}
 \caption{\footnotesize Ego EV states and distance from obstacle.}
% \caption{Subplots (downwards) show EV trajectory evolution when energy consumption is not considered at time instants $0s$, $ 1.5s$, $3.1 s $,  $4.7 s$, $6.2 s $ and $7.8 s$.} 
\label{state_ovtk_straight}
    \end{subfigure} 
    \caption{\footnotesize For (a) and (b), subplots show ego EV trajectory evolution without and with energy consumption consideration respectively at different time instants. Magenta: Trajectory of ego EV, Red: Obstacle, Yellow (increasing transparency with increase in time): Future occupancy set. (c) Ego EV state evolution with time. Red: energy consumption is not penalized and Blue: energy consumption is penalized. In last subplot, Grey: $\dd_{\safe}$, Blue/Red: $\dd_{\EO}$.}
    \label{fig:strght_ovtk}
\end{figure*}

%%%%%%%%%%%%%%%%%%%%%%%%%%%%%%%%%%%%%%%%%%%%%%%%%%%%%%%%%%%%%%%   
\subsection{Efficacy of {\it Energy-Aware} Safe Motion Planning Approach}
We compare the energy consumption (including traction energy consumed over lateral and longitudinal maneuvers introduced earlier), and time taken to reach the destination under the proposed {\it Energy-Aware} (EA) scheme with another MPC-based controller that does not penalize energy in its cost function (i.e., without the $(\gamma_k - \gamma_{\max})^2$ term), but maintains the collision avoidance constraints to ensure a fair comparison. The latter scheme is termed {\it Energy-Unaware} (EU) scheme. The penalty term $(\gamma_k - \gamma_{\max})^2$ present in the EA approach optimizes $\tau_\mathrm{m}$ and $\omega_\mathrm{m}$ in order to minimize energy consumption. These terms further depend on acceleration and steering inputs applied to EV. In addition to acceleration input, steering input is also optimized so that less resistance is applied by lateral forces at wheels along the longitudinal axis of the EV. This leads to a reduction in traction energy consumed as well as an increase in traction energy recovered for the EA approach compared to the EU approach when EV tracks same longitudinal speed for both. In addition to this, a reduction in resistance also results in smaller power dissipation along the transverse axis of the EV as a by-product.

We now discuss the efficacy of the proposed approach for different driving scenarios.

\subsubsection{Overtaking on a straight road} 
In this scenario, the obstacle is moving in front of the ego EV in the same lane. It is located 150m ahead initially, starts from rest, accelerates to some speed, and then decelerates to stop after traveling a distance of 240m. This could potentially represent a moving obstacle that slows down to cause a roadblock on a particular lane. The target or destination state, $x_{\dest}$, for the ego EV is set to $[0,2,0,20,0,0,0.9,0,0,0]^\intercal$ in this case. The model is simulated till the ego EV travels up to $1$ km. The value of $N_s$ is chosen to be $1250$ to ensure $99\%$ confidence ($\beta=0.01$) for actual change in obstacle position to lie in $\Omega_t^*(W_t^*)$ with the probability of $0.99$ ($\epsilon=0.01$) in accordance with Theorem~\ref{thrm1}. This description also applies to the curved road scenario presented afterwards.

Figure \ref{traj_ovtk_straight} and \ref{traj_ovtk_soe_straight} depict the ego EV trajectories for the EU and EA schemes, respectively. It is evident that the ego EV overtakes the obstacle (which is slowing down) while avoiding collision. Figure \ref{state_ovtk_straight} shows the evolution of ego EV states for both cases. The ego EV performs an aggressive overtake under the EU scheme as indicated by variation in yaw rate at high speed. The ego EV maneuver employs lesser deceleration followed by a lesser amount of acceleration during overtake for EU scheme compared to the EA one. For the EA scheme, the ego EV consumes a smaller proportion of maneuver energy as well as recuperates a relatively large amount of $E_T$ during deceleration, resulting in a lower overall $E_T$ compared to the EU scheme. The amount of $E_T$ under the EA scheme for the EV traveling up to 1 km with a single overtaking along a straight road is $0.1529$ kWh, however, the energy consumed is $0.1911$ kWh under the EU scheme. 

 \begin{figure}[htbp]
    \centering
    \includegraphics[trim={0 0 0 0cm},clip,width=0.45\textwidth]{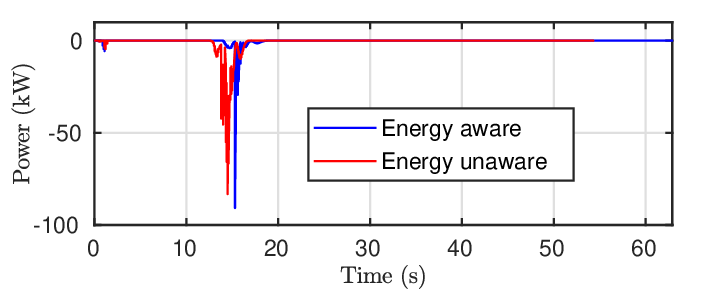}
    \caption{\footnotesize Power consumed by wheel lateral forces ($P_{WL}$) profile for ego EV overtake maneuver given in Figure \ref{fig:strght_ovtk}.}
    \label{fig:ypwr}
\end{figure}
\begin{figure}[htbp]
    \centering
    \includegraphics[trim={0 0 0 0cm},clip,width=0.45\textwidth]{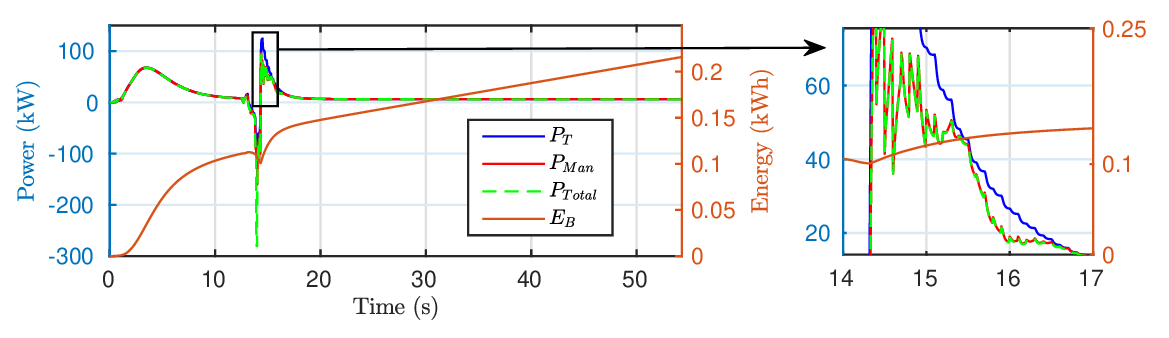}
    \includegraphics[trim={0 0 0 0cm},clip,width=0.45\textwidth]{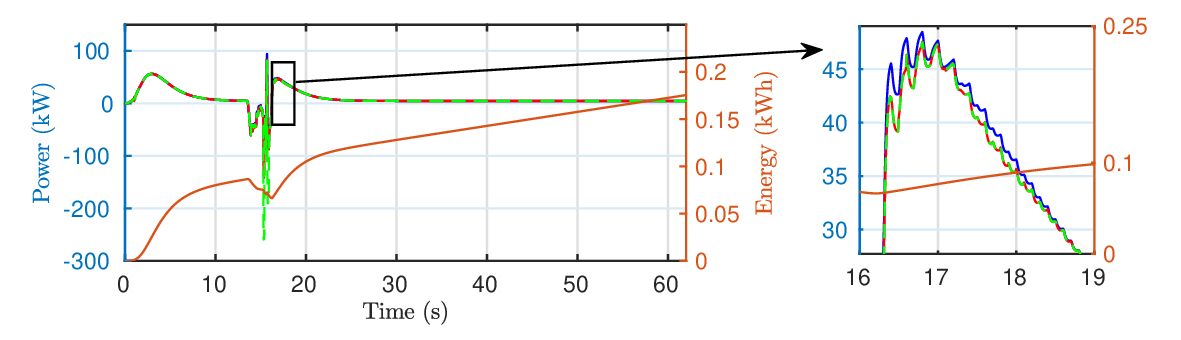}
    \caption{\footnotesize Power flow profile for EU (top) and EA (bottom) overtake maneuver given in Figure \ref{fig:strght_ovtk}.}
    \label{fig:pwr_ovtk}
\end{figure}

%%%%%%%%%%%%%%%%%%%%%%%%%%%%%%%%%%%%%%%%%%%%%%%%%%%%%%%%%%%%%%%%%%%%%%%%%%%%%%%%

\begin{figure*}[t]
    \centering
    \begin{subfigure}[b]{0.32\textwidth}
    \includegraphics[trim={0 0cm 0 0cm},clip,width=\textwidth,height=7cm]{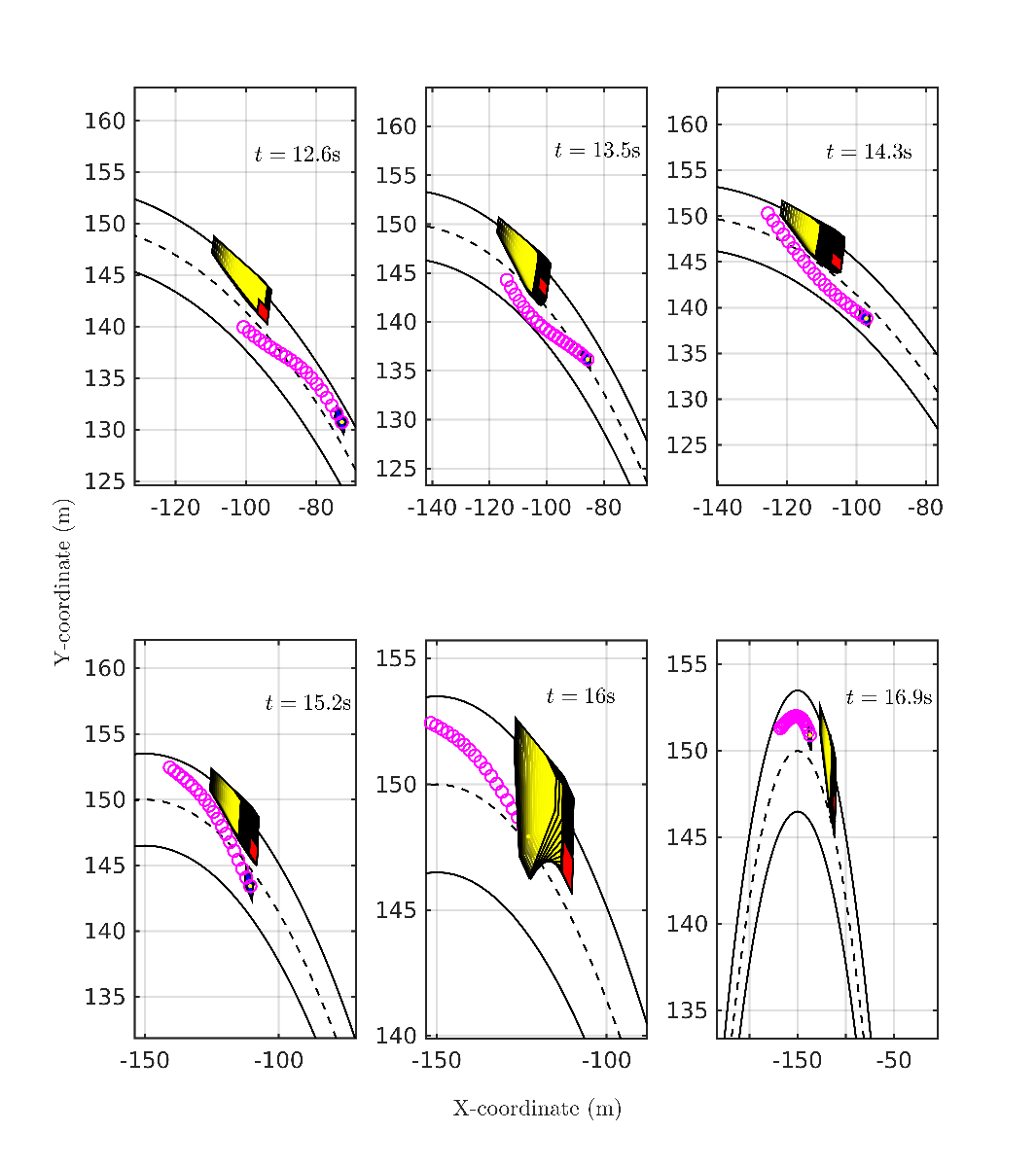}%{figures/ovtk_traj_wo_soe.eps}
    \caption{\footnotesize Evolution of vehicle position in the EU case.}
% \caption{Subplots (downwards) show EV trajectory evolution when energy consumption is not considered at time instants $1.2s$, $ 2.5s$, $3.8 s $,  $5 s$, $5.9 s $ and $6.3 s$.} 
\label{traj_ovtk_curve_1}
    \end{subfigure}
    \hfill
      \begin{subfigure}[b]{0.32\textwidth}
 \includegraphics[trim={0 0cm 0 0cm},clip,width=\textwidth,height=7cm]{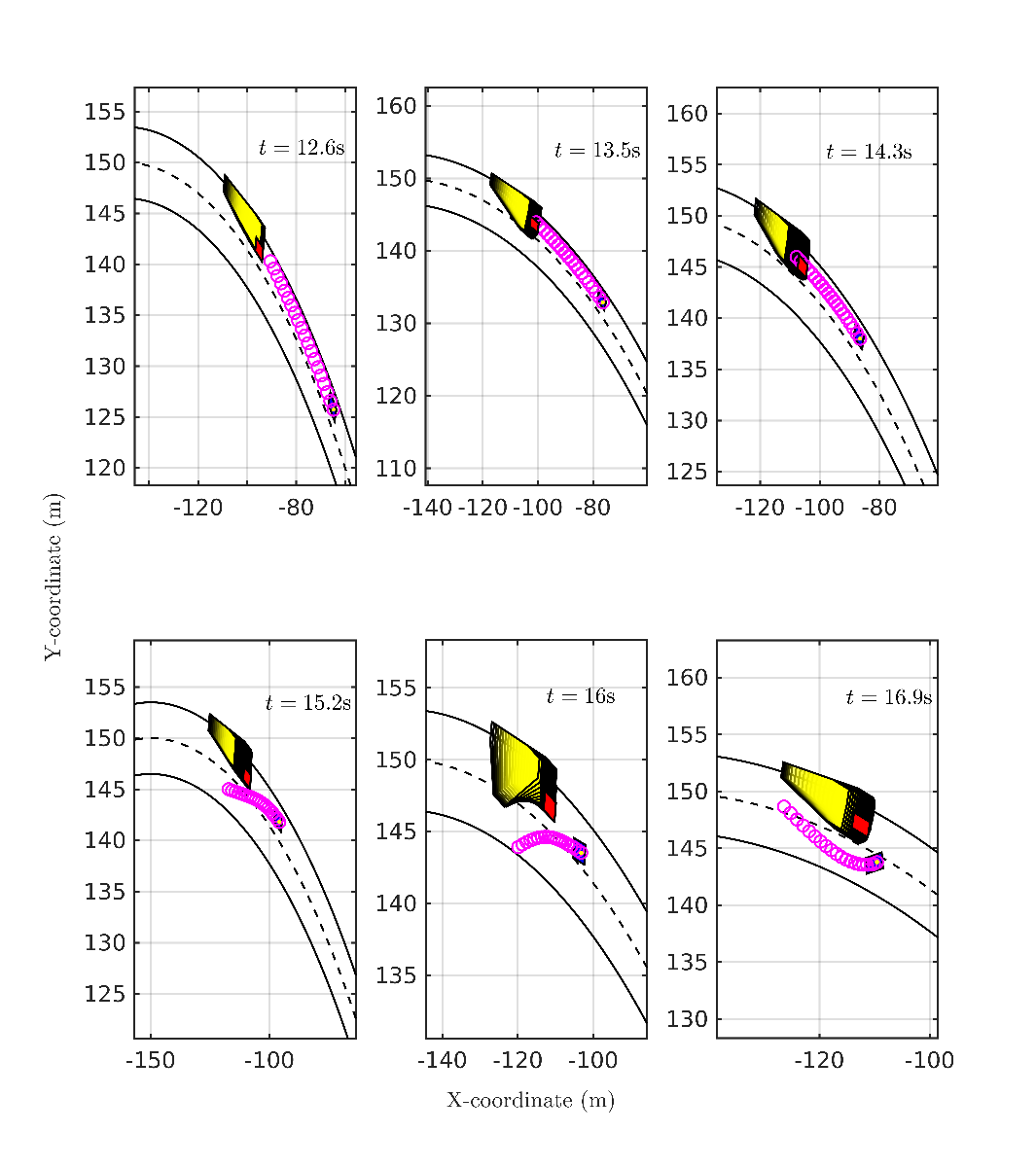}%{figures/ovtk_traj_w_soe.eps}
 \caption{\footnotesize Evolution of vehicle position in the EA case.}
% \caption{Subplots (downwards) show EV trajectory evolution when energy consumption is not considered at time instants $0s$, $ 1.5s$, $3.1 s $,  $4.7 s$, $6.2 s $ and $7.8 s$.} 
\label{traj_ovtk_soe_curve_1}
    \end{subfigure} 
         \hfill
      \begin{subfigure}[b]{0.32\textwidth}
 \includegraphics[trim={0 0cm 0 0cm},clip,width=\textwidth,height=7cm]{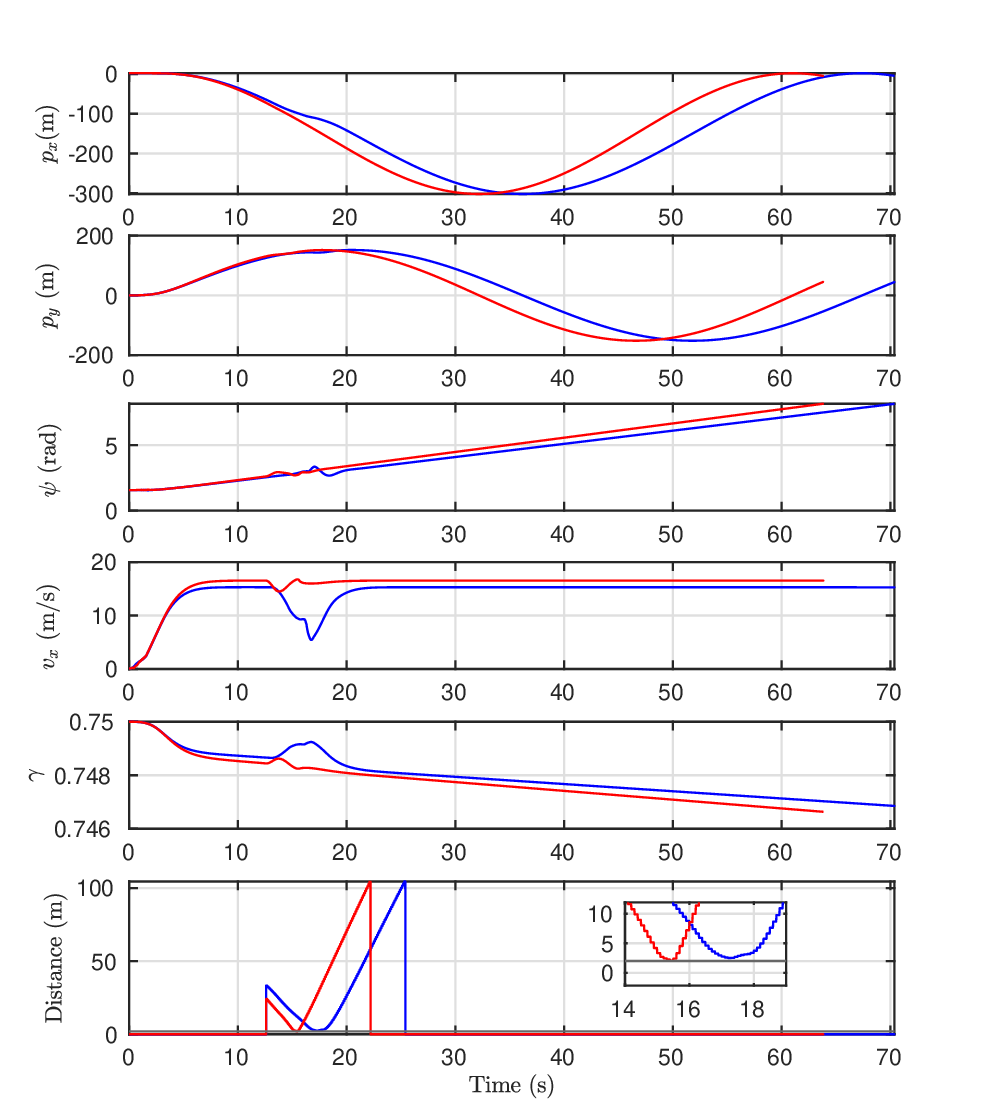}%{figures/ovtk_traj_w_soe.eps}
 \caption{\footnotesize Ego EV states and distance from the obstacle.}
% \caption{Subplots (downwards) show EV trajectory evolution when energy consumption is not considered at time instants $0s$, $ 1.5s$, $3.1 s $,  $4.7 s$, $6.2 s $ and $7.8 s$.} 
\label{state_ovtk_curve}
    \end{subfigure}  
    \caption{\footnotesize For (a) and (b), subplots show ego EV trajectory evolution without and with energy consumption consideration respectively at different time instants. Magenta: Trajectory of ego EV, Red: Obstacle, Yellow (increasing transparency with increase in time): Future occupancy set. (c) Ego EV state evolution with time. Red: energy consumption is not penalized and Blue: energy consumption is penalized. In last subplot, Grey: $\dd_{\safe}$, Blue/Red: $\dd_{\EO}$.}
    \label{fig:curve_ovtk}
\end{figure*}

It can be observed from Figure \ref{fig:ypwr} that $P_{WL}$ has substantial nonzero magnitude during the overtaking period, resulting in $0.0146$ kWh of stated $E_T$ lost to wheel lateral forces for the EU scheme, which is larger compared to $0.0052$ kWh for the EA scheme. This dissipation results in only a fraction of the $E_T$ being consumed in the vehicle longitudinal maneuver as $E_{Long}$, and the other being dissipated in the lateral maneuver as $E_{Lat}$; the exact quantities are $0.1696$ kWh and $0.00047$ kWh for the EU case, and $0.1386$ kWh and $7\times 10^{-5}$ kWh for the EA case. The corresponding power flow and battery energy consumption ($E_B$) profile for both schemes is given in Figure \ref{fig:pwr_ovtk}, and the detailed breakdown of energy consumption is given in Table \ref{tab:energy_breakup}.

The vehicle covered the distance of $1$ km in $54.3$ seconds in the EU scheme as it performed aggressive maneuver, while it took $62.9$ seconds in the EA case where the ego EV performed smooth longitudinal and lateral maneuvers to reduce energy consumption. Thus, the proposed scheme was able to achieve a significant reduction in energy consumption with an increase in travel time of $8.6$ seconds. This observation is consistent with those reported in the literature \cite{kamal2012model,he2020multiobjective,hadjigeorgiou2019optimizing}. The variation of $\dd_{\EO}$ is given in the last subplot of Figure \ref{state_ovtk_straight} which shows that the proposed approach optimizes continuously to keep $\dd_{\EO}$ greater than $\dd_{\safe}$ (chosen to be $2$m) \cite{hussain2018real},\cite{nilsson2015longitudinal}, except for a very small duration where the slack variable was active to obtain a feasible solution within a reasonable time. In practice, one should choose the value of $\dd_{\safe}$ in the MPC optimization problem to be slightly larger than the actual safety requirement to avoid any safety issues caused by the use of slack variables. Similarly, parallel processing will potentially help in mitigating this issue as one can choose a larger prediction horizon and/or obtain faster convergence to the optimal solution. Depending on the vehicle and obstacle type, a different value of $\dd_{\safe}$ may be used.

\begin{figure}[htbp]
    \centering
    \includegraphics[trim={0 0 0 0cm},clip,width=0.45\textwidth]{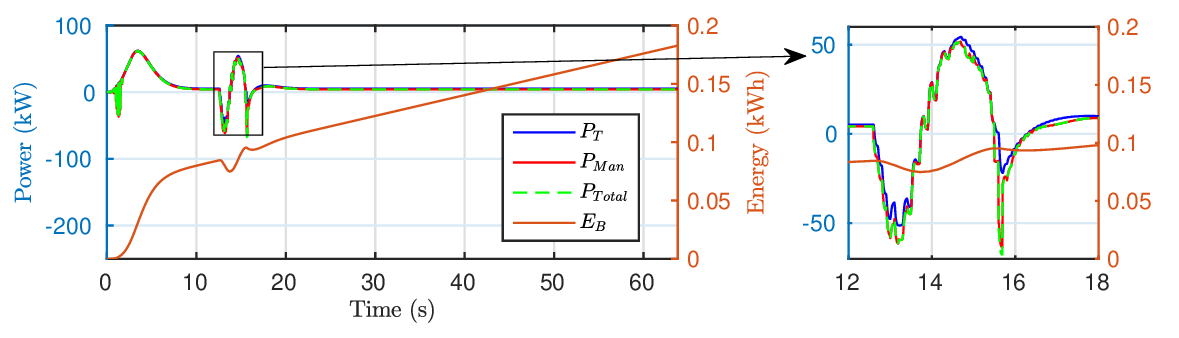}
    \includegraphics[trim={0 0 0 0cm},clip,width=0.45\textwidth]{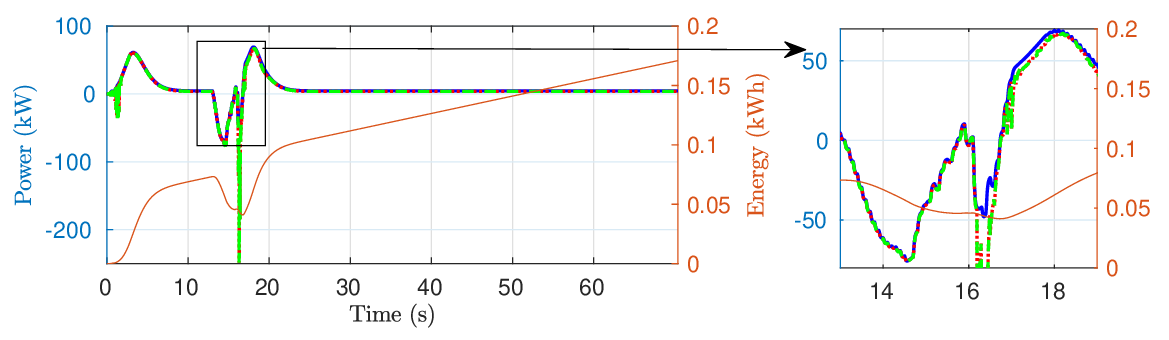}
    \caption{\footnotesize Power flow profile for EU (top) and EA (bottom) overtake maneuver given in Figure \ref{fig:curve_ovtk}.}
    \label{fig:pwr_ovtk_curve}
\end{figure}

\begin{table*}[ht]
        \centering
        \begin{tabular}{|c|p{6.75cm}|c|c|c|c|}
        \hline &
        &\multicolumn{2}{|c|}{\textbf{Straight road}}& \multicolumn{2}{|c|}{\textbf{Curved road}}\\
        \cline{3-6}
    \textbf{S.No.} & \textbf{Variables}&EU case& EA case&EU case& EA case\\ \hline
    \multicolumn{6}{|c|}{\textbf{During Acceleration}}\\ \hline
           1&Battery energy consumed (kWh)   & 0.2279& 0.1993& 0.1943& 0.2036\\ \hline
            2&Traction energy consumed at the wheel over planar maneuver (kWh) ($E_T$)  & 0.2040& 0.1772& 0.1602& 0.1801\\ \hline
            3& Traction energy consumed by wheel lateral forces (kWh) ($E_{WL}$)& -0.0102& -0.0013& -0.0183&-0.0174\\ \hline 
            4& Traction energy consumed over vehicle longitudinal maneuver (kWh) ($E_{Long}$)& 0.1914& 0.1750& 0.1420& 0.1637\\ \hline
            5& Traction energy consumed over vehicle lateral maneuver (traverse + yaw) (kWh) ($E_{Lat}$)& -0.0003& $-3.3 \cdot 10^{-5}$& -0.00013& -0.0009\\ \hline
             \multicolumn{6}{|c|}{\textbf{During Deceleration}}\\ \hline
           6&  Battery energy recovered (kWh) & 0.0120& 0.0225&  0.0115& 0.0327\\ \hline
           7&  Traction energy consumed at wheel over planar maneuver (kWh) ($E_T$) & -0.0129& -0.0243&  -0.0116&-0.0360\\ \hline
           8&  Energy consumption due to friction braking (kWh) ($E_{Brake}$) & -0.0043& -0.0091&  -0.00017& -0.00004\\ \hline
           9&  Traction energy consumed by wheel lateral forces (kWh) ($E_{WL}$)& -0.0044& -0.0039& -0.0037&-0.0074\\ \hline
            10& Traction energy consumed over vehicle longitudinal maneuver (kWh) ($E_{Long}$)& -0.0218& -0.0364& -0.0154& -0.0438\\ \hline
             11& Traction energy consumed over vehicle lateral maneuver (traverse + yaw) (kWh) ($E_{Lat}$)& -0.00017& $-3.7\cdot 10^{-5}$& -0.0001& 0.0004\\ \hline
             \hline
            12& Total battery energy consumed (kWh) ($E_B$)& 0.2159 & 0.1768 &0.1828 & 0.1709\\\hline
             13&Estimated Range (km) & 251.46& 306.98& 296.93& 317.61\\ \hline             
        \end{tabular}
        \caption{Summary of different components of energy consumption for the overtaking scenarios presented in Figures \ref{fig:strght_ovtk} and \ref{fig:curve_ovtk} for the EA and EU schemes, respectively. The curved road has a radius of curvature of $150$ meters. Here, negative values indicate energy recuperation for $E_{T}$, $E_{Long}$, and energy dissipation for $E_{WL}$, $E_{Lat}$, $E_{Brake}$. The energy values satisfy the energy balance equation \eqref{energy_rel} after combining them over the acceleration and deceleration durations.}
        \label{tab:energy_breakup}
    \end{table*}

\subsubsection{Overtaking on a curved road}

Here, the target or destination state, $x_{\dest}$, for the ego EV is set to $[0,2,0,16.67,0,0,0.9,0,0,0]^\intercal$ and the curved road has a radius of curvature of $150$ m. The trajectories of ego EV under both EU and EA schemes are shown in Figures \ref{traj_ovtk_curve_1} and \ref{traj_ovtk_soe_curve_1}, respectively. The ego EV overtakes the obstacle (which is slowing down) while avoiding collision. Figure \ref{state_ovtk_curve} shows the evolution of EV states for both cases. For the EA scheme, the ego EV consumes a significant proportion of maneuver energy, but during deceleration, a relatively large amount of $E_T$ is recuperated, resulting in a lower overall $E_T$ compared to the EU scheme. The traction energy ($E_T$) consumed under the EU scheme for the EV is $0.1448$ kWh is larger compared to $0.1441$ kWh for the EA scheme. As seen in Figure \ref{fig:ypwr_curve}, this situation differs from the preceding one as there is a nonzero $P_{WL}$ over the entire maneuver duration because of the curvature of the road. The power flow profiles for both EA and EU schemes are given in Figure \ref{fig:pwr_ovtk_curve} and the detailed breakup of energy consumption is given in Table \ref{tab:energy_breakup}. During simultaneous steering and regenerative braking, some of the traction power consumed during planar maneuvers ($P_{Man}$) is lost due to resistive lateral forces at wheels. Consequently, only a fraction of $P_{Man}$ is available for recuperation. This is demonstrated by the difference between $P_{Man}$ and $P_{T}$ profiles for negative values in Figure \ref{fig:pwr_ovtk_curve}.

In this case, the EV covers the distance of 1km in $63.84$ seconds for the EU and 70.37 seconds for the EA scheme. The proposed scheme can achieve a significant reduction in energy consumption without a significant decrease in travel time, showing the efficacy of this approach in comparison to an EU one. The variation of $\dd_{\EO}$ is given in the last subplot of Figure \ref{state_ovtk_curve} which shows that $\dd_{\EO}$ is greater than $\dd_{\safe}$ (chosen to be $2$m) throughout the duration of overtaking.

\begin{figure}[htbp]
    \centering
    \includegraphics[trim={0 0 0 0cm},clip,width=0.4\textwidth]{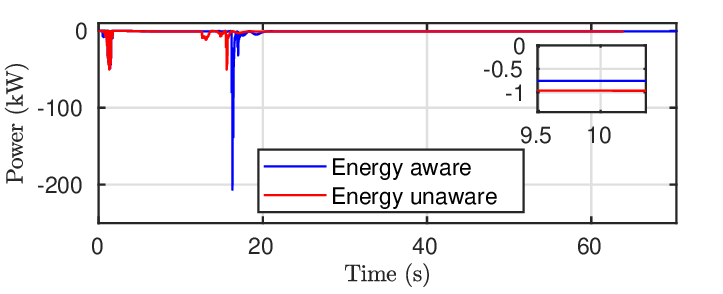}
    \caption{\footnotesize Power consumed by wheel lateral forces ($P_{WL}$) profile for ego EV overtake maneuver given in Figure \ref{fig:curve_ovtk}.}
    \label{fig:ypwr_curve}
\end{figure}

\begin{figure}[htbp]
    \centering
    \includegraphics[trim={0 0 0 0.5cm},clip,width=0.45\textwidth]{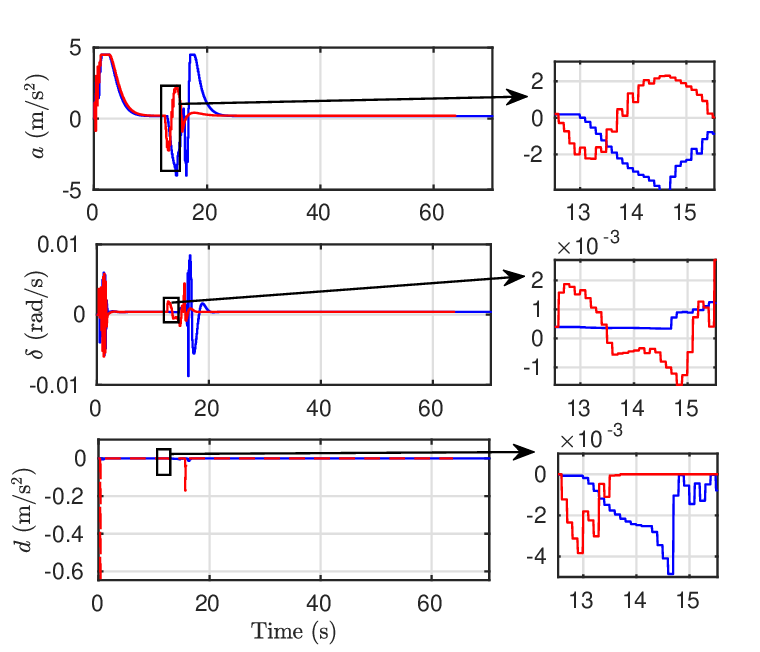}
    \caption{\footnotesize Input profile for ego EV overtake maneuver given in Figure \ref{fig:curve_ovtk}.}
    \label{fig:inp}
\end{figure}

 \begin{figure*}[t]
    \centering
    \begin{subfigure}[b]{0.475\textwidth}
\includegraphics[trim={0cm 0cm 0cm 1cm},clip,width=\linewidth]{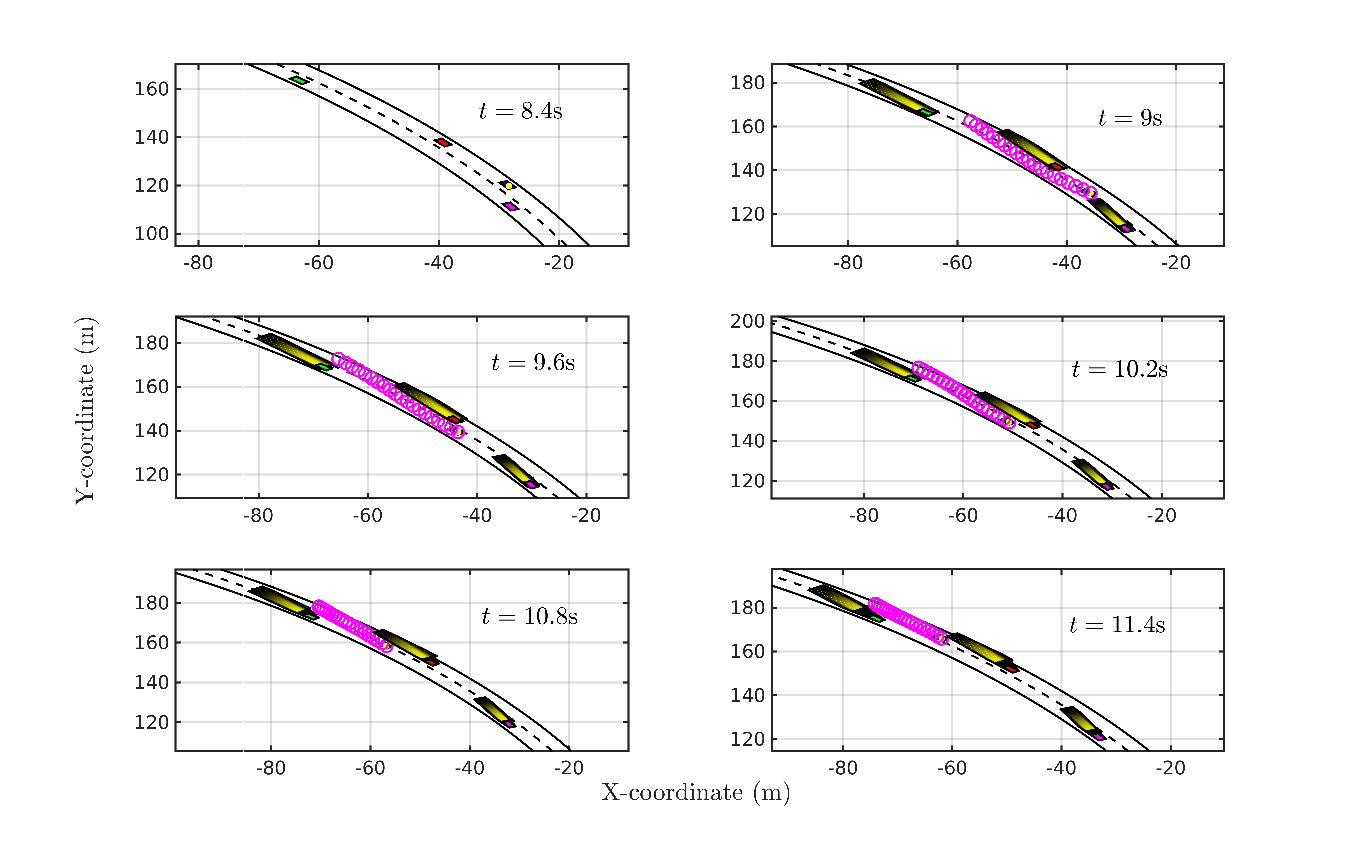}  \caption{Evolution of vehicle position in the EU scheme.} \label{mul_traja} 
% figures/traj_multi_0.eps
    \end{subfigure}
    \hfill
    \begin{subfigure}[b]{0.475\textwidth}
      \includegraphics[trim={0cm 0cm 0cm 1cm},clip,width=\linewidth]{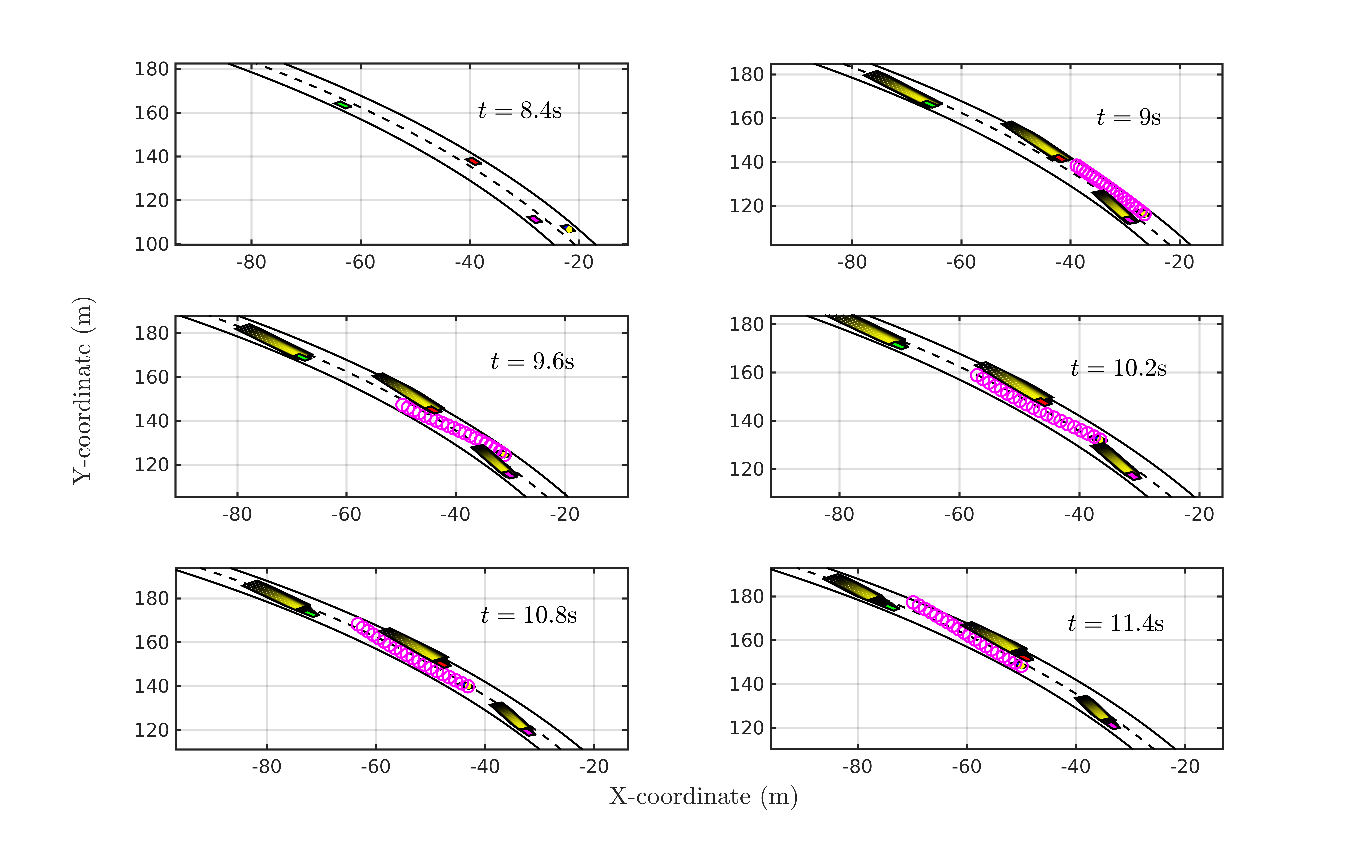}  \caption{Evolution of vehicle position in the EA scheme.} \label{mul_trajb}
    \end{subfigure}
  %  figures/traj_multi_10_new.eps
   \caption{\footnotesize For (a) and (b), subplots show ego EV trajectory evolution without and with energy consumption consideration respectively at different time instants. Magenta circles: Trajectory of ego EV, Red, Magenta, Green polytopes: Obstacle, Yellow (increasing transparency with increase in time): Future occupancy sets.} \label{mul_traj}
\end{figure*}

Figure \ref{fig:inp} shows the input profile for both EU and EA cases. The figure shows that when the acceleration input ($a$) is positive, the deceleration input ($d$) remains negligible (or inactive). There exist different ways to deal with the simultaneous activation of forward acceleration (positive, $a$) and braking deceleration (negative, $d$). While one can add complementarity constraints to force exactly one of the inputs to be active at any given time, the inclusion of such constraints results in a mixed-integer
nonlinear program. Another option is to add a complementarity constraint of the form $a\times d\geq 0$. However, including this constraint significantly increases solution time due to the non-convexity of the product term, rendering the optimization problem intractable. For example, in the present implementation using IPOPT on a table-top PC, the solution time with the complementarity constraint is around 1s (sometimes goes up to 2s) for a single obstacle case. To mitigate this issue and maintain computational tractability, we have penalized the braking deceleration ($d$) term with a large weight in the cost function. In contrast, the computation time is around 0.15 seconds under the proposed approach, which further prioritizes regenerative braking over frictional braking and achieves desirable behavior. Note that in rare cases, the high weight on deceleration input can potentially lead to preferring unsafe maneuvers.

Table \ref{tab:energy_breakup} includes comprehensive energy consumption data needed to appreciate the various phenomena through which energy is consumed. This includes the various components of energy, namely $E_B$, $E_T$, $E_{WL}$, $E_{Long}$, $E_{Lat}$ and $E_{Brake}$ for overtaking maneuvers of EV described above. The results show that overall lateral energy dissipated in the wheel ($E_{WL}$) is substantially smaller under the proposed EA scheme compared to the EU scheme during the maneuver. This enables the ego EV to achieve a safe maneuver with a reduction in traction energy consumption in longitudinal maneuver ($E_{Long}$) relative to the EU approach of $18.3$\%  and $5.3$\%  for straight and curved road scenarios, respectively.

The total battery energy consumption ($E_B$) is shown in S.No. 12 of the Table~\ref{tab:energy_breakup}. It can be observed that during a single overtaking maneuver, $E_B$ is substantially smaller under the proposed EA scheme.  The driving range is estimated by extrapolating this data assuming one overtaking every 1 km (consistent with \cite{knoop2012quantifying,yang2018effect,asaithambi2017overtaking}) for a $54.28$ kWh battery. The estimated range under the proposed scheme shows a significant improvement compared to the EU scheme.

Note that the proposed formulation tries to minimize the overall energy consumption from the battery rather than individual components of the energy consumption. Optimizing individual components, such as the lateral energy consumption, is challenging as it is difficult to express these components as functions of states and control inputs that can be optimized. Such an analysis is beyond the scope of this paper and remains a possible direction for future research.

\begin{figure}[htbp]
    \centering
    \includegraphics[trim={0 0 0 0.5cm},clip,width=0.45\textwidth]{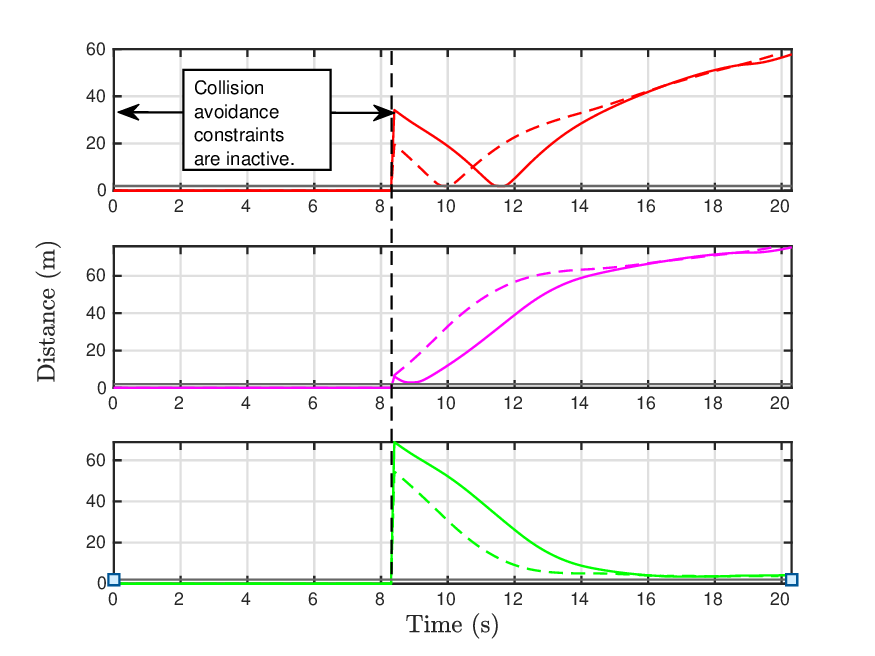}
    \caption{\footnotesize $\dd_{\EO}$ profile corresponding same colored obstacle for ego EV maneuver given in Figure \ref{mul_traj}. Dashed: EU and Solid: EA. Grey: $\dd_{\safe}$}
    \label{fig:dis_multi}
\end{figure}

\begin{figure}
    \centering
    \includegraphics[trim={0 0 0 0cm},clip,width=0.45\textwidth]{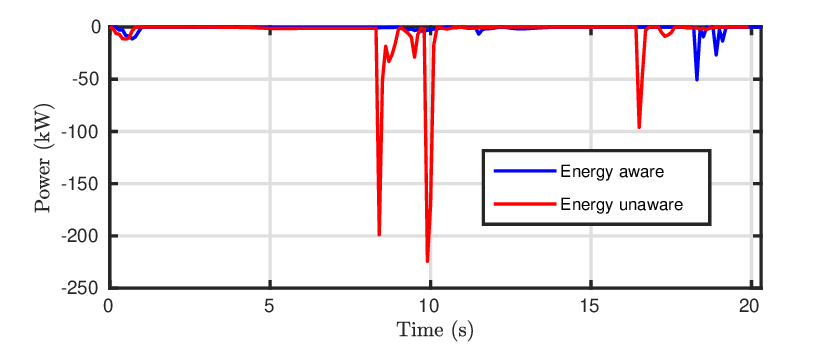}
    \caption{\footnotesize Power consumed by wheel lateral forces ($P_{WL}$) profile for ego EV overtake maneuver given in Figure \ref{mul_traj}.}
    \label{fig:ypwr_multi}
\end{figure}

\begin{figure*}[t]
    \centering
    \begin{subfigure}[b]{\textwidth}
    \includegraphics[trim={0 0cm 0 1cm},clip,width=\textwidth,height=7cm]{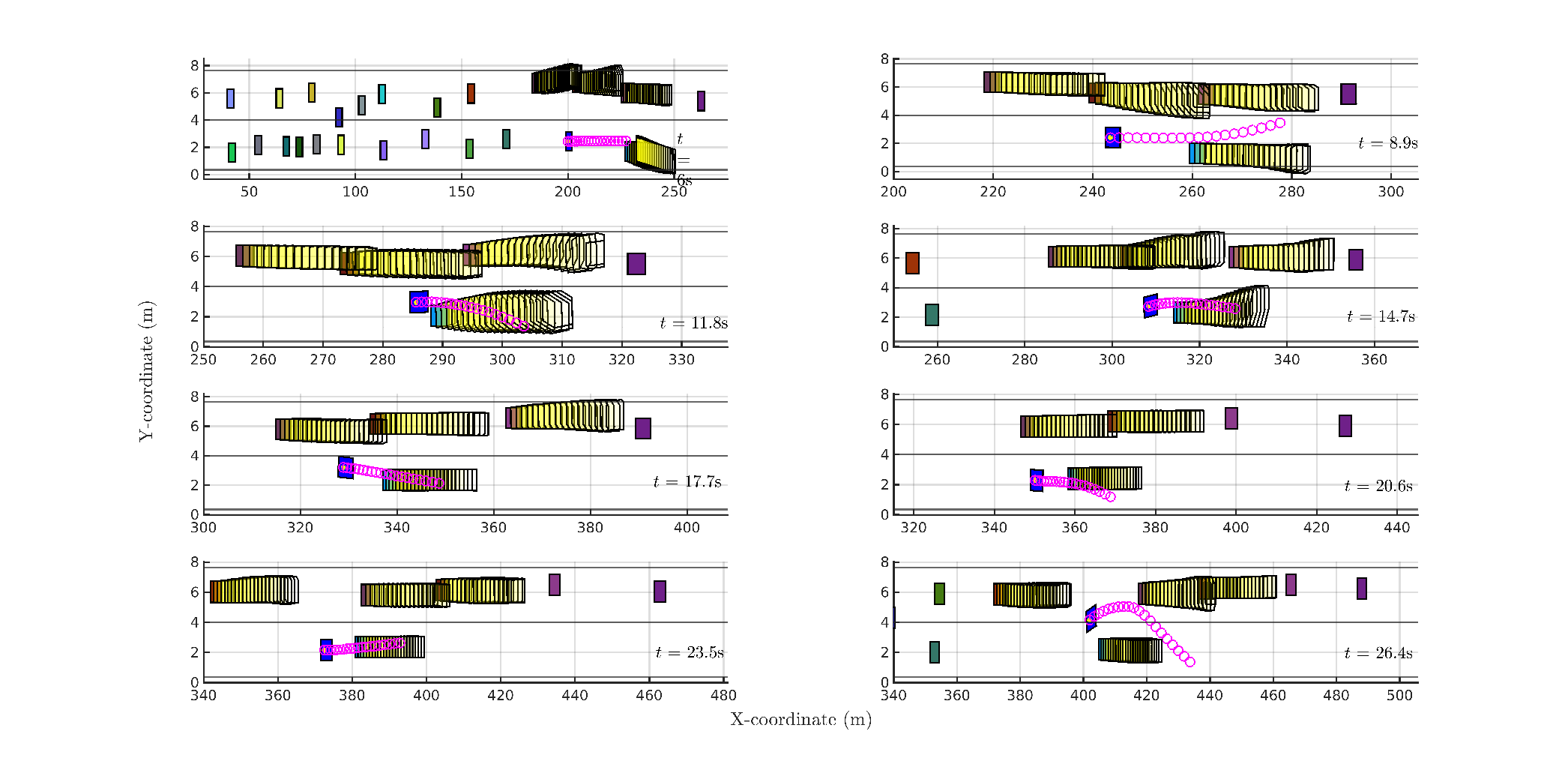}%{figures/ovtk_traj_wo_soe.eps}
    %\caption{}
% \caption{Subplots (downwards) show EV trajectory evolution when energy consumption is not considered at time instants $1.2s$, $ 2.5s$, $3.8 s $,  $5 s$, $5.9 s $ and $6.3 s$.} 
%\label{traj_ovtk_curve}
    \end{subfigure}
    \hfill
      \begin{subfigure}[b]{\textwidth}
 \includegraphics[trim={0 0cm 0 1cm},clip,width=\textwidth,height=7cm]{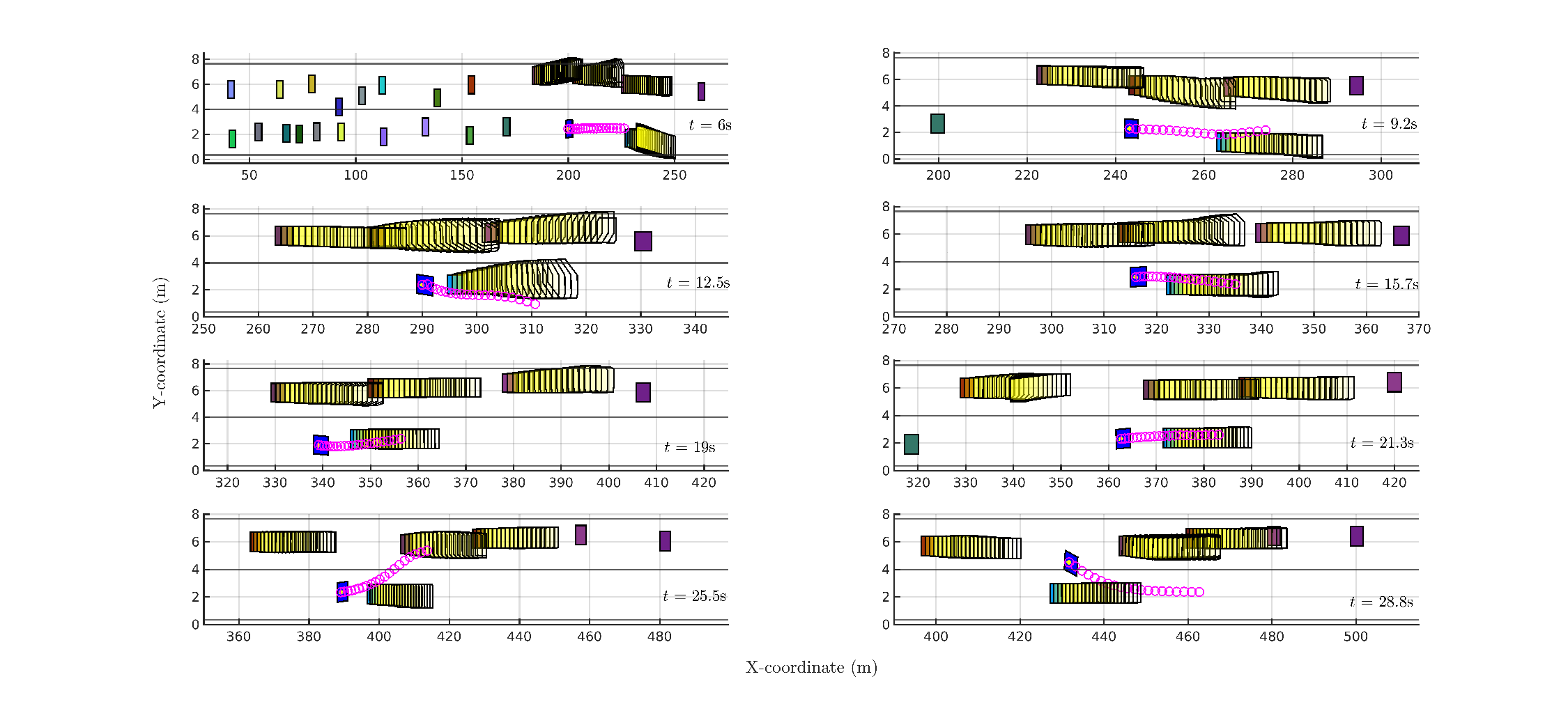}%{figures/ovtk_traj_w_soe.eps}
 %\caption{}
% \caption{Subplots (downwards) show EV trajectory evolution when energy consumption is not considered at time instants $0s$, $ 1.5s$, $3.1 s $,  $4.7 s$, $6.2 s $ and $7.8 s$.} 
%\label{traj_ovtk_soe_curve}
    \end{subfigure} 
    \caption{\footnotesize Top and bottom subplots show ego EV trajectory evolution without and with energy consumption consideration, respectively, at different time instants. Blue: Ego vehicle, Magenta: Trajectory of ego EV, Yellow (increasing transparency with increase in time): Future occupancy set.}
    \label{fig:traffic_ovtk}
\end{figure*}

\begin{figure}
    \centering
    \includegraphics[trim={0 0 0 0cm},clip,width=0.45\textwidth]{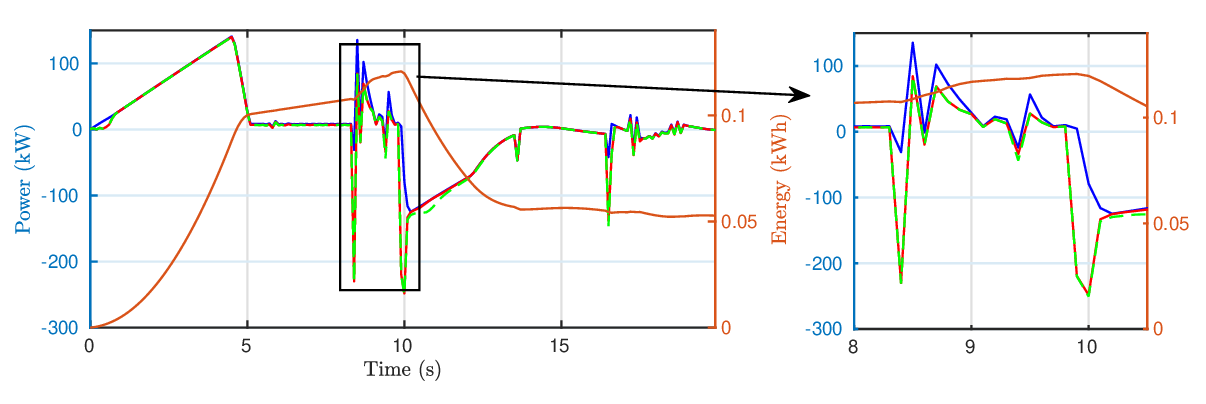}
    \includegraphics[trim={0 0 0 0cm},clip,width=0.45\textwidth]{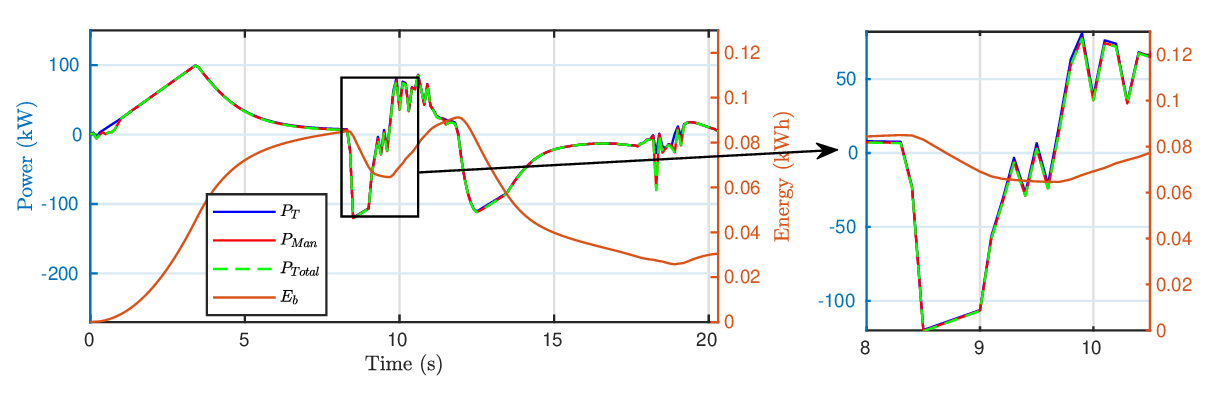}
    \caption{\footnotesize Power flow profile for (top) EU and (bottom) EA planar maneuver given in Figure \ref{mul_traj}.}
    \label{fig:pwr_ovtk_multi}
\end{figure}

\subsubsection{Multiple Obstacles} 
In this scenario, the ego EV is expected to maneuver and reach target states in the presence of multiple obstacles on a curved road with radius of curvature 250m. This scenario shows the efficacy of the proposed approach for short-range motion planning. The target state, $x_{\dest}$, for the ego EV is set to $[250,2,0,0,0,0,0.9,0,0,0]^\intercal$ and $N_s$ is set to $85$ to ensure $90\%$ confidence ($\beta=0.1$) for actual change in obstacle position to lie in $\Omega_t^*(W_t^*)$ with the probability of $0.9$ ($\epsilon=0.1$). Additionally, weights on jerk costs are omitted for this scenario. Figure \ref{mul_traj} shows the maneuver of the ego EV for EU (left) and EA (right) schemes, respectively. The green and magenta obstacles are behind and ahead of the ego EV respectively. Both obstacles move nearby along the center of the adjacent lane, and the red obstacle is moving ahead of the ego EV in the same lane as the ego EV. 

Despite this challenging environment, the ego EV avoids collision with every obstacle ($\dd_{\safe}$ chosen to be 2m) as shown in Figure \ref{fig:dis_multi}, and completes the maneuver in $19.9$s and $20.3$s for the EU and the EA cases, respectively. There exists a significant amount of $P_{WL}$ during $8.5$-$10$s as the ego EV overtakes the red obstacle and during $16$-$17$s as the ego EV maneuvers adjacent to the green obstacle for the EU case. This dissipation is large in amount for the EA case compared to the EU case as shown in Figure~\ref{fig:ypwr_multi}. The corresponding power flow and battery energy consumption ($E_B$) profile for both schemes is given in Figure \ref{fig:pwr_ovtk_multi}. Battery energy consumed under the EA scheme is $0.0339$ kWh, while energy consumed is $0.0578$ kWh for the EU scheme for this example. 

\subsubsection{Real-world traffic} 
Here, we validate the behavioral performance of the proposed framework in realistic road scenarios extracted from real traffic data. For this purpose, an open traffic dataset, namely NGSIM US-101 Vehicle Trajectories and Supporting Data, is used. Scenario selection is conducted to ensure that each scenario includes at least one lane change. This is followed by the extraction of trajectory data for all vehicles that correspond to the lanes over which lane changes were observed. The proposed energy-aware safe motion planning approach is then applied to these scenarios. The result for one such scenario is discussed next. The target or destination state, $x_{\dest}$, for the ego EV is set to $[0, -1.5, 0, 20, 0, 0, 0.9, 0, 0, 0]^\intercal$ in this case. The scenario is simulated until the effect of surrounding vehicles on the behavior of the ego EV is captured.  The value of $N_s$
is chosen to be 85 to ensure $90\%$ confidence ($\beta = 0.1$) for actual change in obstacle position to lie in $\Omega_t^*(W_t^*)$ with the probability of $0.9$ ($\epsilon=0.1$).

Figure \ref{fig:traffic_ovtk} shows the ego EV trajectories for both the EA and EU cases, respectively. Here, ego EV is simulated to travel a distance of 334 m for the former case and 305 m for the latter case to capture the effect of surrounding vehicles. The corresponding power consumption by wheel lateral forces ($P_{WL}$ ) profile for both schemes is given in
Figure \ref{fig:ypwr_ngsim}. It can be observed that the ego EV carries out safe planar maneuvers in these cases, and the energy dissipation due to wheel lateral forces is lower in the EA case. The ego EV covers the traveled distance in 29.3
seconds with energy consumption of 0.1422 kWh for the EU and in 32 seconds with 0.1042 kWh for the EA scheme. Thus, the proposed scheme can achieve a significant reduction in energy consumption with an increase
in travel time of 2.7 seconds. 

\begin{figure}
    \centering
    \includegraphics[trim={0 0 0 0cm},clip,width=0.45\textwidth]{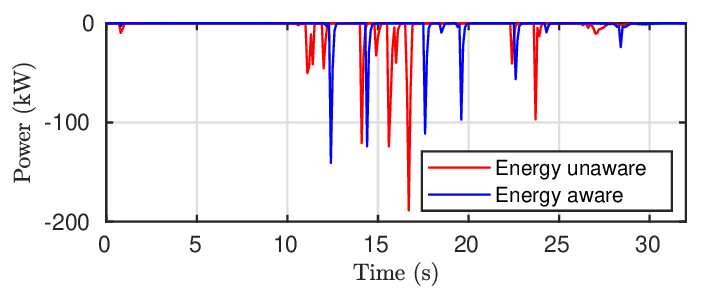}
    \caption{\footnotesize Power consumed by wheel lateral forces ($P_{WL}$) profile for ego EV overtake maneuver given in Figure \ref{fig:traffic_ovtk}.}
    \label{fig:ypwr_ngsim}
\end{figure}

The mean computation time for solving the optimization problem \eqref{eq:MPC_main} together with confidence intervals for different numbers of obstacles and prediction horizons are shown in Figure~\ref{sol_time}. The distribution for the case with $N=20$ and with three obstacles is denoted by $3b$, and so on. Although the size of the optimization problem scales with the increase in the number of obstacles, the resulting increase in computation time is not significant. 

We emphasize that our results are meant to demonstrate the feasibility and functionality of our approach rather than its real-time implementation. The computation time reported here is just to illustrate the relative impact of the prediction horizon and the number of obstacles. Our optimization program is written in Python, which is a high-level programming language, with basic solver settings. A customized C++ or Python implementation together with multi-core programming to parallelize the tasks is likely to improve computation time by an order of magnitude. We are working towards a realistic implementation of the proposed approach as a follow-up work.

\begin{figure}[htbp]
\centering
\includegraphics[trim={0 0cm 0 0cm},clip,width=0.4\textwidth,keepaspectratio]{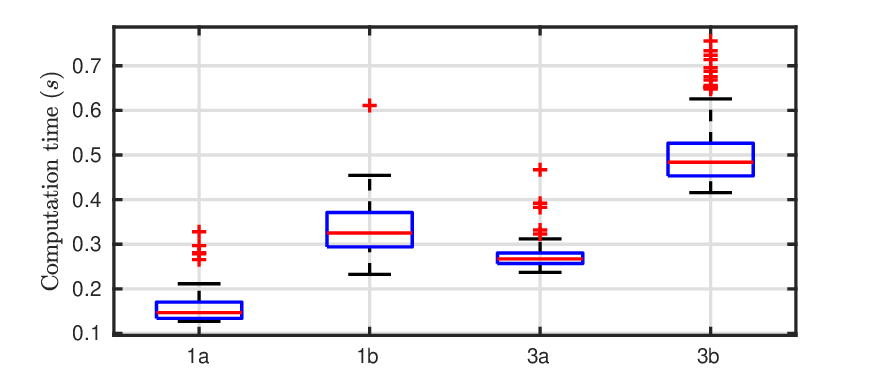}
\caption{\footnotesize Distribution of time required to solve the optimization problem \eqref{eq:MPC_main}. On the $x$-axis, the number $1$ and $3$ indicate number of obstacles, $a$ indicates prediction horizon $N=10$ and $b$ indicates $N=20$.}\label{sol_time}
\end{figure}  

\subsection{Analysis of Parameters}

\subsubsection{Effect of slack variable ($\xi$)} The variation of the slack variable throughout the trajectory duration for the ego EV, as depicted in Figure \ref{fig:strght_ovtk} and \ref{fig:curve_ovtk}, is illustrated in Figure \ref{fig:slack_profile}. Furthermore,  Figure \ref{fig:slack_profile2} presents the slack variable values across numerous simulations of overtaking scenarios on a straight road. It is noticeable that the slack variable remains nearly zero most of the time, reaching a significant magnitude in only 1.8\% of the optimization cases. Despite the non-zero slack variable, safety is maintained across all simulations due to the robust occupancy-based formulation. We have not included a solution to these rare violations. However, in practical implementation, an obvious way to maintain safety is to find a safety boundary in position-velocity space. In these rare cases, the EV then switches to a sub-optimal operating mode whenever it finds itself in an unsafe region in the position-velocity space.

\begin{figure}[htbp]
    \centering
    \includegraphics[trim={0 0 0 0cm},clip,width=0.45\textwidth]{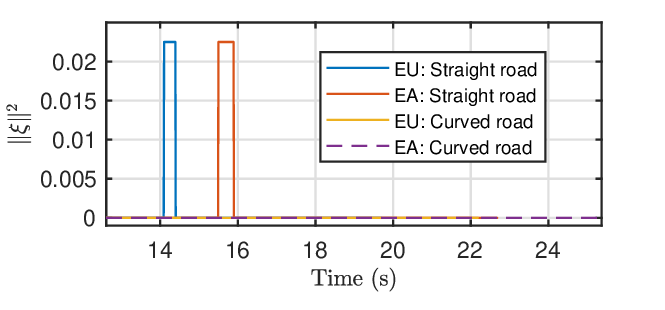}
    \caption{\footnotesize Slack variable ($\xi$) profile for the ego EV trajectory given in Figure \ref{fig:strght_ovtk} and \ref{fig:curve_ovtk}.}
    \label{fig:slack_profile}
\end{figure}

\begin{figure}[htbp]
    \centering
    \includegraphics[trim={0 0 0 0.25cm},clip,width=0.45\textwidth]{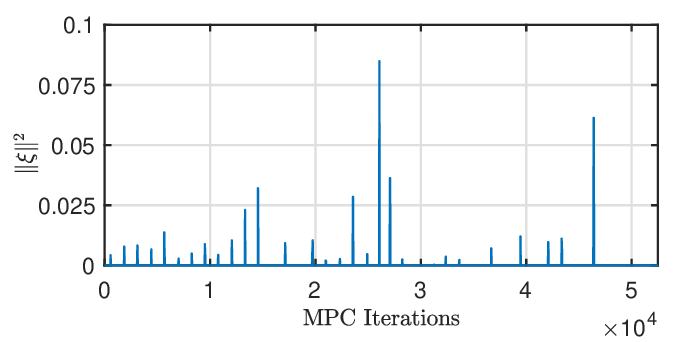}
    \caption{\footnotesize Slack variation across MPC runs for various overtake scenarios on a straight road.}%  970 out of 52480, 1.8\%.
    \label{fig:slack_profile2}
\end{figure}

\subsubsection{Effect of $\epsilon$ and $\beta$} To analyze the effect of these parameters, we conducted 50 Monte Carlo simulation runs for overtake scenarios on curved road with two different sets of parameters. The corresponding relevant metrics are present in Table \ref{tab:metrics}. The metric “conservative” indicates the conservative behavior of the ego EV, representing scenarios where the ego EV cannot find a feasible space to plan a trajectory and thus applies the brakes to maintain a safe distance. This occurs when the predicted occupancy of an obstacle covers the majority of the road space ahead of the ego EV.

\begin{table}[htbp]
    \centering
    \begin{tabular}{|p{2cm}|p{2cm}|p{2.3cm}|}
    \hline
   & $\epsilon=0.1, \beta =0.1$&$\epsilon=0.01, \beta=0.01$ \\ \hline
         Failures& 3& 0\\ \hline 
       Conservative  & 0&44 \\ \hline
    \end{tabular}
    \caption{\footnotesize Metrics for energy-aware safe motion planning approach across 50 Monte Carlo simulations.}
    \label{tab:metrics}
\end{table}

Our analysis shows that a larger sample size provides higher safety guarantees but results in more conservative behavior. Conversely, a smaller sample size reduces conservative behavior and results in few failures (collisions). It is important to note that the sample size derived in Theorem \ref{thrm1} offers a loose bound, and a smaller sample size can also achieve the same probabilistic guarantees in practice.

\subsubsection{Effect on battery parameters} In all the scenarios studied in the paper, the C-rate of the battery remains within acceptable limits for commercially used battery chemistry. More specifically, under all cases of maneuvers on straight roads and curved roads, the regenerative charging current and discharging current are within 1.5 C and 3 C, respectively. 

We simulate emergency braking scenarios with varying levels of aggressiveness to further analyze the effect of the proposed approach on battery parameters. It is observed that the regenerative charging current remains within a 2.75 C-rate (maximum 400A) for harsh braking from 100 km/h to a stop over 4s, which is reasonable for a typical LFP battery used in electric vehicles.  Braking on curved roads or while turning further reduces the regenerative charging current, also reducing the energy regenerated and stored in the battery, ensuring battery parameters remain within the stated limits.

%%%%%%%%%%%%%%%%%%%%%%%%%%%%%%%%%%%%%%%%%%%%%%%%%%%%%%%%%%%%%%%%%
%%%%%%%%%%%%%%%%%%%%%%%%%%%%%%%%%%%%%%%%%%%%%%%%%%%%%%%%%%%%%%%%%

\section{Discussion and Limitations}\label{limitations}

Note that our contribution is along two dimensions: minimizing energy consumption and safe trajectory planning. Therefore, in order to obtain a fair comparison, we have kept the safety constraints intact, and compared the results when the energy consumption is penalized with the case where the the energy consumption is not penalized. While one could conduct the above comparison for alternative formulations of safety constraints, it is expected that the results will remain qualitatively similar. Since we could not find any previous work that simultaneously considers energy consumption and safety aspects, we could not include any comparative results.

Regarding the energy aspect, this paper shows that optimizing the steering input profile alongside acceleration input can significantly enhance energy efficiency in EVs. However, the presented approach assumes constant drivetrain efficiency and relies on an estimated motor efficiency map. While the qualitative results are expected to remain consistent, a more precise representation of losses in the powertrain components of EVs could yield more accurate quantitative energy consumption values. For instance, integrating a split-loss approach for modeling losses in EVs \cite{koch2022implementation}, with the proposed method might provide more precise results.

Regarding the safety aspect, various approaches for future evolution prediction of traffic participants have been explored in the past literature. Single trajectory-based approaches include physics-based models for prediction as well as maneuver-based prediction \cite{houenou2013vehicle}. Physics-based models, such as constant velocity and constant turn rate and acceleration, provide accurate predictions only over very short duration. Maneuver-based prediction can capture the intention of obstacles over longer duration but does not account for uncertainty in their future trajectories. In contrast, the reachability set-based prediction method \cite{kianfar2013safety} captures uncertainty in obstacle motion but often necessitates fail-safe planning due to its inherently conservative nature.

The number of samples required to generate an uncertainty set with a probabilistic guarantee, as presented in this paper, balances the trade-off between conservatism and efficiency in safe driving scenarios. However, similar to earlier methods, this approach does not consider interaction among traffic participants and applies primarily to sparse and quasi-stationary traffic. These are limitations compared to other probabilistic set-based methods that account for interaction. For instance, the approach in \cite{althoff2009model} models traffic participants as Markov chains and utilizes stochastic reachable sets to characterize their transition probabilities. Moreover, the inputs of traffic participants in that model are influenced by their interactions. Incorporating such interactions could further improve energy efficiency of EVs while guaranteeing safety.

%%%%%%%%%%%%%%%%%%%%%%%%%%%%%%%%%%%%%%%%%%%%%%%%%%%%%%%%%%%%%%%%%%
%%%%%%%%%%%%%%%%%%%%%%%%%%%%%%%%%%%%%%%%%%%%%%%%%%%%%%%%%%%%%%
\section{Conclusion}\label{conclusionn}
In this paper, we presented a data-driven robust optimization framework for energy-aware motion planning of EVs in scenarios that consist of static as well as dynamic obstacles. Simulations performed on a variety of realistic scenarios illustrate that the proposed formulation achieves a significant reduction in energy consumption with a minimal change in travel time compared to an energy-unaware scheme. 

To summarize, the primary assumption underlying the proposed formulation is that there is no interaction among traffic participants and the orientation of the obstacles is assumed to remain unchanged throughout the prediction horizon. It is further assumed that battery power consumption is a function of motor torque and angular velocity, energy regeneration efficiency is constant and power loss at the inverter and losses at wheels are negligible. In addition to relaxing the above assumptions, there are several possible directions in which the present work can be strengthened. These include (i) handling model uncertainty or mismatch between the nominal and high fidelity EV models, (ii) inclusion of interaction, (iii) formulating the collision avoidance constraints in terms of chance constraints or conditional-value-at-risk, and (iv) implementing the proposed approach in real-time hardware-in-loop simulation environments. We hope that this paper stimulates further research along the above mentioned directions.

\appendices
\section{Proof of Proposition \ref{dynamic}} \label{appendix1}

\begin{proof}
For a given translation $\omega^k \in \Omega^k_t(W^k_t)$ of the obstacle, we define the distance between the EV and the obstacle as
\begin{align*}
& \dist(p_{t+k+1},\mathbb{O}_t \! \oplus \! \{\omega^{k}\}) \\ 
= & \min_y\{||y||:A(p_{t+k+1}+y-\omega^{k})\leq b\}\\
= & \max_{\lambda_k} \{ (Ap_{t+k+1}-A\omega^{k}-b)^\intercal\lambda_k:
\lambda_k \geq 0, ||A^\intercal \lambda_k||_2\leq 1 \},
\end{align*}
which follows from \cite[Proposition 1]{zhang2020optimization}. The condition for collision avoidance is given by
\begin{align}
&\dd_{\safe}-\dist(p_{t+k+1},\mathbb{O}_{t+k+1})< 0 \quad \forall \omega^{k} \in \Omega^k_t(W^k_t) \nonumber \\
\Leftrightarrow & \max_{\substack{\omega^{k} \in \Omega^k_t(W^k_t)}}[ \dd_{\safe}- \max_{\lambda_k} \{ (Ap_{t+k+1}-A\omega^{k}-b)^\intercal \lambda_k: \nonumber \\
&\qquad \lambda_k\geq 0, ||A^\intercal \lambda_k||_2 \leq 1 \}] < 0. \label{eq:lhs_thm_main}
\end{align}

The L.H.S. of the above equation can be stated as
\begin{align*}
&\max_{\substack{\omega^{k} \in \Omega^k_t(W^k_t)}}\min_{\substack{\lambda_k\geq 0, \\ ||A^\intercal \lambda_k||_2\leq 1}}\left[ \dd_{\safe}- (Ap_{t+k+1}-A\omega^{k}-b)^\intercal \lambda_k \right]\\
& \leq \min_{\substack{\lambda_k\geq 0, \\ ||A^\intercal \lambda_k||_2\leq 1}}\max_{\substack{ \omega^{k} \in \Omega^k_t(W^k_t)}}  \left[ \dd_{\safe}- (Ap_{t+k+1}-A\omega^{k}-b)^\intercal \lambda_k \right]\\
&=\min_{\substack{\lambda_k\geq 0, \\ ||A^\intercal \lambda_k||_{2}\leq 1}} \dd_{\safe}-(Ap_{t+k+1}-b)^\intercal \lambda_k
\\ & \qquad \qquad \qquad +\max_{\omega^{k}\in \Omega^k_t(W^k_t)}(\lambda_k^\intercal  A)\omega^{k},
\end{align*}
where the first inequality is the standard min-max inequality $\left(\max\;\min\; [\;\;]\right) \leq \left(\min\;\max\;[\;\;]\right)$. Since, $\omega^k\in \Omega^k_t(W^k_t)=\{\omega^k\in \mathbb{R}^2:G^k_t\omega^k\leq h^k_t\}$, strong duality in linear programming yields
\begin{align*}
\max_{\omega^k\in \Omega^k_t(W^k_t)}(\lambda^\intercal_k A)\omega^k \Longleftrightarrow \min_{\substack{\mu_k\geq 0,\\A^\intercal \lambda_k=G^{k^\intercal}_t \mu_k}}\mu_k^\intercal h^k_t.
\end{align*}  

Therefore, the L.H.S. of \eqref{eq:lhs_thm_main} is upper bounded by the optimal value of the following optimization problem:
\begin{align*}
\min_{\lambda_k,\mu_{k}}\quad &\dd_{\safe}-(Ap_{t+k+1}-b)^\intercal \lambda_k+\mu_{k}^\intercal h^k_t\\
\text{subject to} \quad &\lambda_k\geq 0, \quad ||A^\intercal \lambda_k||_{2}\leq 1,\\
&\mu_{k}\geq 0, \quad A^\intercal \lambda_k=G^{k^\intercal}_t \mu_{k}.
\end{align*}
Therefore, the following is a sufficient condition for the collision avoidance constraint to hold in a robust manner for all perturbations that belong to the support set $\Omega(W^k)$: 
\begin{align*}
\exists \lambda_k\geq 0,\mu_{k}\geq 0:&\dd_{\safe}-(Ap_{t+k+1}-b)^\intercal \lambda_k+\mu_{k}^\intercal h^k_t<0,\\
&||A^\intercal\lambda_k||_{2}\leq 1,A^\intercal\lambda_k=G^{{k}^\intercal}_t\mu_{k}.
\end{align*}
This concludes the proof.
\end{proof}

\bibliographystyle{ieeetr}
\bibliography{reference}

\end{document}